\newtheorem{theorem}{Theorem}
\newtheorem{corollary}[theorem]{Corollary}
\newtheorem{lemma}[theorem]{Lemma}
\newtheorem{example}{Example}
\renewcommand{\epsilon}{\varepsilon}
\newcommand{\eps}{\epsilon}
\mathchardef\mhyphen="2D
\DeclareMathOperator*{\argmin}{arg\,min}
\DeclareMathOperator*{\argmax}{arg\,max}
\newcommand{\myl}{\bBigg@{0.83}}
\renewcommand{\qed}{$\blacksquare$}
\newcommand{\floor}[1]{\left\lfloor {#1} \right\rfloor}
\renewcommand{\le}{\leqslant}
\renewcommand{\ge}{\geqslant}
\newcommand{\npcomplete}{$\mathcal{NP}$-complete\xspace}
\newcommand{\nphard}{$\mathcal{NP}$-hard\xspace}
\newcommand{\bbN}{\mathbb{N}}
\newcommand{\bbR}{\mathbb{R}}
\newif\ifcomments
\newcommand{\kibitz}[2]{{\color{#1}{#2}}}
\newcommand{\kibitz}[2]{}
\newcommand{\vc}[1]{\kibitz{red}{[VC: #1]}}
\renewcommand{\vec}{\mathbf}
\newcommand{\seqdecision}{public decision making\xspace}
\newcommand{\issue}{issue\xspace}
\newcommand{\issues}{issues\xspace}
\newcommand{\mms}{\text{MMS}\xspace}
\newcommand{\rrs}{\text{RRS}\xspace}
\newcommand{\pps}{\text{PPS}\xspace}
\newcommand{\prop}{\text{Prop}\xspace}
\newcommand{\propone}{\text{Prop1}\xspace}
\renewcommand{\vc}{\vec{c}}
\newcommand{\vA}{\vec{A}}
\newcommand{\set}[1]{\{{#1}\}}
\begin{document}
\title{Fair Public Decision Making\thanks{A preliminary version of this paper appears in EC'17.}}  
\author{Vincent Conitzer \\ Duke University \\ \texttt{conitzer@cs.duke.edu}
\and
Rupert Freeman \\ Duke University \\ \texttt{rupert@cs.duke.edu}
\and
Nisarg Shah \\ Harvard University \\ \texttt{nisarg@g.harvard.edu}}
  
\date{}

\maketitle

\begin{abstract}
We generalize the classic problem of fairly allocating indivisible goods to the problem of \emph{fair public decision making}, in which a decision must be made on several social issues simultaneously, and, unlike the classic setting, a decision can provide positive utility to multiple players. We extend the popular fairness notion of proportionality (which is not guaranteeable) to our more general setting, and introduce three novel relaxations --- \emph{proportionality up to one issue, round robin share, and pessimistic proportional share} --- that are also interesting in the classic goods allocation setting. We show that the Maximum Nash Welfare solution, which is known to satisfy appealing fairness properties in the classic setting, satisfies or approximates all three relaxations in our framework. We also provide polynomial time algorithms and hardness results for finding allocations satisfying these axioms, with or without insisting on Pareto optimality. 
\end{abstract}


\newcommand{\pgd}{private goods division\xspace}
\section{Introduction}
\label{sec:intro}

The literature on mathematically rigorous fair division dates back to the work of \citet{Stein48}. In the field's long history, most work focuses on the fair division of \emph{private goods}, in which a set of $m$ items must be divided among a set of $n$ players. Agents express their preferences by specifying their value for each good, and our goal is to find a division of the goods that is fair to all players.

One particularly appealing notion of fairness is \emph{envy-freeness}~\cite{Fol67}, which says that no player should want to switch her set of items with that of another player. This is a natural and strong notion of fairness that has long been the subject of fair division research~\cite{Var74,RP98,Klijn00,HRS02,BL08,BKK12,CKKK12}. It actually implies many other fairness notions such as \emph{proportionality}~\citep{Stein48} --- each player should get at least a $1/n$ fraction of her value for the entire set of goods --- and \emph{envy-freeness up to one good (EF1)}~\citep{LMMS04} --- no player should envy another player after removing at most one good from the latter player's bundle. Unfortunately, envy-freeness cannot always be guaranteed, and therefore its relaxations have been focused on~\cite{LMMS04,Bud11,PW14,CKMP+16}. 

Division of private goods, however, is not the only application in which we may desire a fair outcome. Often, we may need to make decisions where every alternative gives positive utility to many players, rather than to just one player as in the case of private goods. For instance, consider a couple, Alice and Bob, deciding where to go to dinner. Alice likes Italian food the most, but does not like Indian, whereas Bob prefers Indian food but does not like Italian. 
When there is only a single decision to make, we are simply in a classic bargaining game where players must attempt to arrive at a mutually agreeable solution. \citet{Nash50b} proposed maximizing the product of players' utilities (the \emph{Nash welfare}) as an elegant solution that uniquely satisfies several appealing properties. But no matter how we arrive at a decision -- and there is a myriad of work in computational social choice~\cite{BCEL+16} discussing how exactly we should do so -- some tradeoff must necessarily be made, and we may not be able to make everyone happy.


However, if we have several public decisions to make, maybe we can reach a compromise by making sure that all players are happy with at least some of the decisions. For example, if Alice and Bob are to follow their dinner with a movie, then maybe Bob will be willing to eat Italian food for dinner if he gets to pick his favorite movie, and maybe Alice will agree to this compromise. 

Note that this setting generalizes the classic private goods setting, because in this special case we can view each public decision as the allocation of a single good. While envy is a compelling notion in the private goods setting, it makes less sense for public decisions. 
In our example, irrespective of where Alice and Bob go for dinner, because they are eating the same food, it is not clear what it would mean for Alice to envy Bob. If she could somehow trade places with Bob, she would still be sitting at the other end of the dinner table, eating the same food, and not be any better off. 
Thankfully, proportionality still has a sensible interpretation: Each player should get at least a $1/n$ fraction of the utility she would get if her most desired alternative was chosen for each decision. Unfortunately, as with envy-freeness, proportionality cannot always be guaranteed. Therefore in this work we consider relaxations of proportionality in order to arrive at fairness notions that can be guaranteed. 


\subsection{Our Results}
\label{subsec:results}

Formally, a public decision making problem consists of $m$ issues, where each issue has several associated alternatives. Each of $n$ players has a utility for each alternative of each issue. Making a decision on an issue amounts to choosing one of the alternatives associated with the issue, and choosing an overall outcome requires making a decision on each issue simultaneously. The utility to a player for an outcome is the sum of her utilities for the alternatives chosen for different issues.
This is a very simple setting, but one in which the problem of fairness is already non-trivial.

We propose relaxations of proportionality in two directions. The first, \emph{proportionality up to one issue (\propone)}, is similar in spirit to EF1, stating that a player should be able to get her proportional share if she gets to change the outcome of a single issue in her favor. The second direction is based on the guarantees provided by the round robin mechanism. This mechanism first orders the set of players, and then repeatedly goes through the ordering, allowing each player to make her favorite decision on any single issue, until decisions are made on all the issues. Our first relaxation in this direction, the \emph{round robin share (\rrs)}, guarantees each player the utility that she would have received under the round robin mechanism if she were the last player in the ordering. Note that the round robin mechanism lets each player make decisions on roughly the same number of issues. A further relaxation in this direction, the \emph{pessimistic proportional share (\pps)}, guarantees each player the utility that she would get if her favorite alternatives were chosen for (approximately) a $1/n$ fraction of the issues, where these issues are chosen adversarially. 

We examine the possibility and computational complexity of satisfying combinations of these fairness desiderata. We first observe that the round robin mechanism satisfies both \propone and \rrs (and thus \pps). However, it fails to satisfy even the most basic efficiency property, Pareto optimality (PO), which requires that no other outcome should be able to make a player strictly better off without making at least one player strictly worse off. 

	{\small
		\begin{table}[ht]
			\begin{center}
				\begin{tabular}{c|cccc}
					& PO & \pps & \rrs & \propone\\
					\hline
					MNW, Private goods & \checkmark & \checkmark~(Th.~\ref{thm:mnw-rrs-pps-pgd}) & $\frac{1}{2}$~(Th.~\ref{thm:mnw-rrs-pps-pgd}) & \checkmark~(Th.~\ref{thm:mnw-prop1})\\[0.1cm]
					MNW, Public decisions & \checkmark~(Th.~\ref{thm:mnw-prop1}) & $\frac{1}{n}$~(Th.~\ref{THM:MNW-RRS-PPS}) & $\frac{1}{n}$~(Th.~\ref{THM:MNW-RRS-PPS}) & \checkmark~(Th.~\ref{thm:mnw-prop1})\\[0.1cm]
					Leximin Mechanism & \checkmark & \checkmark~(Th.~\ref{thm:leximin}) & \checkmark~(Th.~\ref{thm:leximin}) & $\frac{1}{2}$~(Th.~\ref{thm:leximin}) \\[0.1cm]
					Round Robin Method & $\times$ & \checkmark~(Th.~\ref{thm:round-robin}) & \checkmark~(Th.~\ref{thm:round-robin})  & \checkmark~(Th.~\ref{thm:round-robin}) \\
				\end{tabular}
			\end{center} 
			\caption{Axioms satisfied or approximated by the mechanisms we consider. The MNW solution is split into private goods and general decisions because we obtain significantly stronger results for private goods. Results for the leximin mechanism and the round robin method apply equally to private goods and public decisions. The approximation results are lower bounds; we omit the upper bounds from the table for simplicity.}
			\label{tab:summary}
		\end{table}}

When insisting on Pareto optimality, we observe that the leximin mechanism --- informally, it chooses the outcome that maximizes the minimum utility to any player --- satisfies \rrs (therefore \pps) and PO via a simple argument. However, this argument does not extend to establishing \propone, although we show that \rrs implies a $1/2$ approximation to \propone. To that end, we prove that the maximum Nash welfare (MNW) solution --- informally, it chooses the outcome that maximizes the \emph{product} of utilities to players --- that is known for its many desirable fairness properties in dividing private goods~\citep{CKMP+16} satisfies \propone and PO in our public decision making framework, and simultaneously provides a $1/n$ approximation to both \rrs and \pps. We also show that this approximation is tight up to a factor of $O(\log n)$. For division of private goods, these approximations improve significantly: the MNW solution completely satisfies \pps, and provides an $n/(2n-1) > 1/2$ approximation (but not better than $2/3$ approximation in the worst case) to \rrs. Table~\ref{tab:summary} provides a summary of these results.

However, both the MNW outcome and the leximin outcome are \nphard to compute. It is therefore natural to consider whether our fairness properties can be achieved in conjunction with PO in polynomial time. For public decision making, the answer turns out to be negative for \pps and \rrs, assuming $\mathcal{P} \neq \mathcal{NP}$. For division of private goods, however, we show that there exists a polynomial time algorithm that satisfies \pps and PO. 

\subsection{Related Work}
\label{subsec:related}

Two classic fair division mechanisms --- the leximin mechanism and the maximum Nash welfare (MNW) solution --- play an important role in this paper. Both mechanisms have been extensively studied in the literature on \pgd. In particular, \citet{KPS15} (Section 3.2) show that the leximin mechanism satisfies envy-freeness, proportionality, Pareto optimality, and a strong game-theoretic notion called group strategyproofness, which prevents even groups of players from manipulating the outcome to their benefit by misrepresenting their preferences, in a broad fair division domain with private goods and a specific form of non-additive utilities. On the other hand, the MNW solution has been well studied in the realm of additive utilities~\cite{RE10,CKMP+16}. For divisible goods, the MNW solution coincides with another well-known solution concept called \emph{competitive equilibrium from equal incomes} (CEEI)~\cite{Var74}, which also admits an approximate version for indivisible goods~\cite{Bud11}. For indivisible goods, the MNW solution satisfies envy-freeness up to one good, Pareto optimality, and approximations to other fairness guarantees. One line of research aims to approximate the optimum Nash welfare~\cite{CG15,Lee15}, although it is unclear if this achieves any of the appealing fairness guarantees of the MNW solution.

Our model is closely related to that of voting in combinatorial domains (see~\cite{LX16} for an overview). 
However, this literature focuses on the case where there is dependency between decisions on different issues. In contrast, our model remains interesting even though the issues are independent, and incorporating dependency is an interesting future direction.
Although there is a range of work in the voting literature that focuses on fairness~\cite{CC83,Mon95,BMS05,ABCE+17}, especially in the context of \emph{representation} in multi-winner elections, it focuses on ordinal, rather than cardinal, preferences.\footnote{That said, there is a recent line of work on \emph{implicit utilitarian voting} that attempts to maximize an objective with respect to the cardinal utilities underlying the ordinal preferences~\cite{PR06,BCHL+15}, and is therefore closer to our work.} Another difference is that fairness concepts in voting apply most naturally when $n>>m$, whereas our notions of fairness are most interesting when $m \ge n$.


Our work is also reminiscent of the \emph{participatory budgeting} problem~\cite{Cab04}, in which there are multiple public projects with different costs, and a set of projects need to be chosen based on preferences of the participants over the projects, subject to a budget constraint. Recently, researchers in computational social choice have addressed this problem from an axiomatic viewpoint~\cite{GKSA16}, including fairness considerations~\cite{FGM16}, and from the viewpoint of \emph{implicit utilitarian voting}~\cite{BPNS17}. However, they assume access only to ordinal preferences (that may stem from underlying cardinal utilities), while we assume a direct access to cardinal utilities, as is common in the fair division literature. Also, we do not have a budget constraint that binds the outcomes on different issues. 
\newcommand{\para}[1]{\smallskip\noindent\textbf{#1}}
\newcommand{\calC}{\mathcal{C}}
\section{Model}
\label{sec:prelim}

For $k \in \bbN$, define $[k] \triangleq \{1,\ldots,k\}$. Before we introduce the problem we study in this paper, let us review the standard fair division setting with private goods.

\para{Private goods division.} A \pgd problem consists of a set of \emph{players} $N=[n]$ and a set of $m$ \emph{goods} $M$. Each player $i \in N$ is endowed with a utility function $u_i : M \to \bbR_{+}$ such that $u_i(g)$ denotes the value player $i$ derives from good $g \in M$. A standard assumption in the literature is that of additive valuations, i.e., (slightly abusing the notation) $u_i(S) = \sum_{g \in S} u_i(g)$ for $S \subseteq M$. An \emph{allocation} $\vA$ is a partition of the set of goods among the set of players, where $A_i$ denotes the bundle of goods received by player $i$. Importantly, players only derive utility from the goods they receive, i.e., the goods private to them. The utility of player $i$ under allocation $\vA$ is $u_i(\vA) = u_i(A_i)$. 

\para{Public decision making.} A \emph{\seqdecision} problem also has a set of players $N = [n]$, but instead of private goods, it has a set of \emph{\issues} $T = [m]$. Each \issue $t \in T$ has an associated set of alternatives $A^t=\{a_1^t, \ldots, a_{k_t}^t\}$, exactly one of which must be chosen. Each player $i$ is endowed with a utility function $u_i^t : A^t \to \bbR_{+}$ for each \issue $t$, and derives utility $u_i^t(a_j^t)$ if alternative $a_j^t$ is chosen for \issue $t$. In contrast to private goods division, a single alternative can provide positive utility to multiple players. 

An outcome $\vc=(c_1, \ldots, c_m)$ of a \seqdecision problem is a choice of an alternative for every \issue, i.e., it consists of an outcome $c_t \in A^t$ for each issue $t \in T$. Let $\calC$ denote the space of possible outcomes. Slightly abusing the notation, let $u_i^t(\vc) = u_i^t(c_t)$ be the utility player $i$ derives from the outcome of \issue $t$. We also assume additive valuations: let $u_i(\vc) = \sum_{t \in T} u_i^t(\vc)$ be the utility player $i$ derives from outcome $\vc$.

In this work, we study deterministic outcomes, and in Section~\ref{sec:disc}, discuss the implications when randomized outcomes are allowed. Further, we study the \emph{offline} problem in which we are presented with the entire problem up front, and need to choose the outcomes on all \issues simultaneously. 
One can also define an online version of the problem~\cite{FZC17}, in which we must commit to the outcome of \issue $t$ before observing \issues $t'$ with $t' > t$ for all $t$, but we do not consider that version here.

\para{Private goods versus public decisions.} To see why \seqdecision generalizes \pgd, take an instance of \pgd, and create an instance of \seqdecision as follows. Create an issue $t_g$ for each good $g$. Let there be $n$ alternatives in $A^{t_g}$, where alternative $a_i^{t_g}$ gives player $i$ utility $u_i(g)$ while giving zero utility to all other players. It is easy to see that choosing alternative $a_i^{t_g}$ is equivalent to allocating good $g$ to player $i$. Hence, the constructed \seqdecision problem effectively mimics the underlying \pgd problem. 

%
%

\subsection{Efficiency and Fairness} 

In this paper, we not only adapt classical notions of efficiency and fairness defined for \pgd to our \seqdecision problem, but also introduce three fairness axioms that are novel for both \seqdecision and \pgd. First, we need additional notation that we will use throughout the paper. 

Let $p \triangleq \floor{m/n}$. For \issue $t \in T$ and player $i \in N$, let $a_{max}^t(i) \in \argmax_{a \in A^t} \{ u_i^t(a) \}$ and $u^t_{max}(i)=u_i^t(a_{max}^t(i))$. That is, $a_{max}^t(i)$ is an alternative that gives player $i$ the most utility for \issue $t$, and $u^t_{max}(i)$ is the utility player $i$ derives from $a_{max}^t(i)$. Let the sequence $\langle u^{(1)}_{max}(i),\ldots,u^{(m)}_{max}(i)\rangle$ represent the maximum utilities player $i$ can derive from different \issues, sorted in a non-ascending order. Hence, $\{u^{(k)}_{max}(i)\}_{k \in [m]} = \{u^t_{max}(i)\}_{t \in T}$ and $u^{(k)}_{max}(i) \ge u^{(k+1)}_{max}(i)$ for $k \in [m-1]$. 

\para{Efficiency.} In this paper, we focus on a popular notion of economic efficiency. We say that an outcome $\vc$ is \emph{Pareto optimal} (PO) if there does not exist another outcome $\vec{c'}$ that can provide at least as much utility as $\vc$ to every player, i.e., $u_i(\vec{c'}) \ge u_i(\vc)$ for all $i \in N$, and strictly more utility than $\vc$ to some player, i.e., $u_{i^*}(\vec{c'}) > u_{i^*}(\vc)$ for some $i^* \in N$.

\para{Fairness.} For \pgd, perhaps the most prominent notion of fairness is envy-freeness~\cite{Fol67}. An allocation $\vA$ is called \emph{envy-free} (EF) if every player values her bundle at least as much as she values any other player's bundle, i.e., $u_i(A_i) \ge u_i(A_j)$ for all $i,j \in N$. Because envy-freeness cannot in general be guaranteed, prior work also focuses on its relaxations. For instance, an allocation $\vA$ is called \emph{envy-free up to one good} (EF1) if no player envies another player after removing at most one good from the latter player's bundle, i.e., for all $i,j \in N$, either $u_i(A_i) \ge u_i(A_j)$ or $\exists g_j \in A_j$ such that $u_i(A_i) \ge u_i(A_j\setminus\set{g_j})$. 

Unfortunately, as argued in Section~\ref{sec:intro}, the notion of envy is not well defined for public decisions. Hence, for \seqdecision, we focus on another fairness axiom, Proportionality, and its relaxations. For \pgd, proportionality is implied by envy-freeness.\footnote{This assumes non-wastefulness, i.e., that all goods are allocated. We make this assumption throughout the paper.}

\para{Proportionality (\prop).} At a high level, proportionality requires that each player must receive at least her ``proportional share'', which is a $1/n$ fraction of the utility she would derive if she could act as the dictator. For a \seqdecision problem, the \emph{proportional share} of player $i$ ($\prop_i$) is $1/n$ times the sum of the maximum utilities the player can derive across all \issues, i.e., 
\[ \prop_i = \frac{1}{n} \sum_{t \in T} u^t_{max}(i) .\]
For $\alpha \in (0,1]$, we say that an outcome $\vc$ satisfies $\alpha$-\emph{proportionality} ($\alpha$-\prop) if $u_i(\vc) \ge \alpha \cdot \prop_i$ for all players $i \in N$. We refer to $1$-\prop simply as \prop. 

\para{Proportionality up to one \issue (\propone).} We introduce a novel relaxation of proportionality (more generally, of $\alpha$-proportionality) in the same spirit as envy-freeness up to one good, which is a relaxation of envy-freeness. For $\alpha \in (0,1]$, we say that an outcome $\vc$ satisfies $\alpha$-\emph{proportionality up to one \issue} ($\alpha$-\propone) if for every player $i \in N$, there exists an \issue $t \in T$ such that, ceteris paribus, changing the outcome of $t$ from $c_t$ to $a^t_{max}(i)$ ensures that player $i$ achieves an $\alpha$ fraction of her proportional share, i.e., if 
\[ 
\forall i \in N\ \ \exists t \in T\ \text{ s.t. }\ u_i(\vc) - u_i^t(\vc) + u^t_{max}(i) \ge \alpha \cdot \prop_i.
\] 
We refer to $1$-\propone simply as \propone.

\para{Round robin share (\rrs).} Next, we introduce another novel fairness axiom that is motivated from the classic round robin method that, for private goods, lets players take turns and in each turn, pick a single most favorite item left unclaimed. For \seqdecision, we instead let players make a decision on a single \issue in each turn. The utility guaranteed to the players by this approach is captured by the following fairness axiom. 


Recall that the sequence $\langle u^{(1)}_{max}(i),\ldots,u^{(m)}_{max}(i) \rangle$ represents the maximum utility player $i$ can derive from different \issues, sorted in a non-ascending order. Then, we define the \emph{round robin share} of player $i$ ($\rrs_i$) as 
\[ 
\rrs_i = \sum_{k=1}^{p} u^{(k\cdot n)}_{max}(i).
\]
This is player $i$'s utility from the round robin method, if she is last in the ordering and all \issues she does not control give her utility 0.
For $\alpha \in (0,1]$, we say that an outcome $\vc$ satisfies $\alpha$-\emph{round robin share} ($\alpha$-\rrs) if $u_i(\vc) \ge \alpha \cdot \rrs_i$ for all players $i \in N$. Again, we refer to $1$-\rrs simply as \rrs.

\para{Pessimistic proportional share (\pps).} We introduce another novel fairness axiom that is a further relaxation of round robin share. Note that the round robin method, by letting players make a decision on a single \issue per turn, allows each player to make decisions on at least $p = \floor{m/n}$ \issues. The following axiom captures the utility players would be guaranteed if each player still made decisions on a ``proportional share'' of $p$ issues, but if these issues were chosen pessimistically. 

We define the \emph{pessimistic proportional share} of player $i$ ($\pps_i$) to be the sum of the maximum utilities the player can derive from a set of $p$ \issues, chosen adversarially to minimize this sum:
\[ 
\pps_i = \sum_{k=m-p+1}^m u^{(k)}_{max}(i).
\]
For $\alpha \in (0,1]$, we say that an outcome $\vc$ satisfies $\alpha$-\emph{pessimistic proportional share} ($\alpha$-\pps) if $u_i(\vc) \ge \alpha \cdot \pps_i$ for all players $i \in N$. Again, we refer to $1$-\pps simply as \pps.

\para{Connections among fairness properties.} Trivially, proportionality (\prop) implies proportionality up to one \issue (\propone). In addition, it can also be checked that the following sequence of logical implications holds: $\prop \implies \mms \implies \rrs \implies \pps$.

Here, \mms is the \emph{maximin share guarantee}~\cite{Bud11,PW14}. Adapting the definition naturally from \pgd to \seqdecision, 
the maximin share of a player is the utility the player can guarantee herself by dividing the set of \issues into $n$ bundles, if she gets to make the decisions best for her on the \issues in an adversarially chosen bundle. 
The maximin share (\mms) guarantee requires that each player must receive utility that is at least her maximin share. We do not focus on the maximin share guarantee in this paper. 

\subsection{Mechanisms}

A mechanism for a \seqdecision problem (resp. a \pgd problem) maps each input instance of the problem to an outcome (resp. an allocation). We say that a mechanism satisfies a fairness or efficiency property if it always returns an outcome satisfying the property. There are three prominent mechanisms that play a key role in this paper. 

\para{Round robin method.} As mentioned earlier, the round robin method first fixes an ordering of the players. Then the players take turns choosing their most preferred alternative on a single issue of their choice whose outcome has not yet been determined. 

\para{The leximin mechanism.} The leximin mechanism chooses an outcome which maximizes the utility of the worst off player, i.e., $\min_{i \in N} u_i(\vc)$. Subject to this constraint, it maximizes the utility of the second least well off player, and so on. Note that the leximin mechanism is trivially Pareto optimal because if it were possible to improve some player's utility without reducing that of any other, it would improve the objective that the leximin mechanism optimizes. 

\para{Maximum Nash welfare (MNW).} The \emph{Nash welfare} of an outcome $\vc$ is the product of utilities to all players under $\vc$: $NW(\vc)=\prod_{i \in N} u_i(\vc)$. When there exists an outcome $\vc$ with $NW(\vc)>0$, then the MNW solution chooses an arbitrary outcome $\vc$ that maximizes the Nash welfare. When all outcomes have zero Nash welfare, it finds a largest cardinality set $S$ of players that can be given non-zero utility, and selects an outcome maximizing the product of their utilities, i.e., $\prod_{i \in S} u_i(\vc)$. 

\subsection{Examples}

We illustrate the fairness properties through two examples. 

\begin{example}
	\label{ex:ex1}
	Consider a \seqdecision problem with two players ($N=[2]$) and two \issues ($T=[2]$). Each \issue has two alternatives ($|A^1| = |A^2| = 2$). The utilities of the two players for the two alternatives in both \issues are as follows. 
	\medskip
	\begin{center}
	\begin{tabular}{c|cc}
	&$a_1^t$&$a_2^t$\\
	\hline
	$u^t_1$ & 1 & 0\\
	$u^t_2$ & 0 & 1\\
	\end{tabular} \quad for $t \in [2]$.
	\end{center} 
	\medskip
	The various fair shares of the two players are $\prop_1 = \rrs_1 = \pps_1 = \prop_2 = \rrs_2 = \pps_2 = 1$. Now, outcome $\vc=(a_1^1, a_1^2)$ gives utilities $u_1(\vc) = 2$ and $u_2(\vc) = 0$, and therefore violates \prop, \rrs, and \pps. It satisfies \propone because switching the decision on either \issue in favor of player $2$ makes her achieve her proportional share. On the other hand, outcome $\vc = (a^1_1,a^2_2)$ gives utility $1$ to both players, and thus satisfies \prop (as well as \propone, \rrs, and \pps, which are relaxations of \prop). 
\end{example}

\begin{example}
	\label{ex:ex2}
	Consider a \seqdecision problem with two players ($N=[2]$) and eight \issues ($T=[8]$). Once again, each \issue has two alternatives, for which the utilities of the two players are as follows. 
	\medskip
	\begin{center}
	\begin{tabular}{c|cc}
	&$a_1^t$&$a_2^t$\\
	\hline
	$u^t_1$ & 1 & 0\\
	$u^t_2$ & 0 & 1\\
	\end{tabular}
	\quad for $t \in \{1,2,3,4\}$, \quad and \quad
	\begin{tabular}{c|cc}
	&$a_1^t$&$a_2^t$\\
	\hline
	$u^t_1$ & 1&0\\
	$u^t_2$ & 0&0\\
	\end{tabular}
	\quad for $t \in \{5,6,7,8\}$.
	\end{center}
	\medskip
	In this case, we have $\prop_1 = \rrs_1 = \pps_1 = 4$, whereas $\prop_2 = \rrs_2 = 2$ and $\pps_2 = 0$. Consider outcome $\vc=(a_1^1, a_1^2, a_1^3, a_1^4, a_1^5, a_1^6, a_1^7, a_1^8)$. Then, we have $u_1(\vc) = 8$ while $u_2(\vc) = 0$, which satisfies \pps but violates \rrs. Further, $\vc$ also violates \propone because switching the outcome of any single \issue can only give player $2$ utility at most $1$, which is less than $\prop_2 = 2$. On the other hand, outcome $\vc = (a_2^1,a_2^2,a_2^3,a_2^4,a_1^5,a_1^6,a_1^7,a_1^8)$ achieves $u_1(\vc) = u_2(\vc) = 4$, and satisfies \prop (and thus its relaxations \propone, \rrs, and \pps). 
\end{example}


\section{(Approximate) Satisfiability of Axioms}
\label{sec:mnw}

If we are willing to sacrifice Pareto optimality, then we can easily achieve both \rrs (and therefore \pps) and \propone simultaneously with the round robin mechanism. This is not a surprising result. \rrs is defined based on the guarantee provided by the round robin mechanism, and \pps is a relaxation of \rrs. The round robin mechanism is also known to satisfy EF1 for \pgd, which is similar in spirit to \propone. 

\begin{theorem}
The round robin mechanism satisfies \rrs (and therefore \pps) and \propone, and runs in polynomial time.
\label{thm:round-robin}
\end{theorem}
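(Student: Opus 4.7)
The plan is to verify the three properties separately; polynomial runtime is immediate since the method makes $m$ picks, each requiring $O(n)$ work to identify the picking player's favorite remaining issue.

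For $\rrs$ (which then implies $\pps$ via the chain of implications already noted in the excerpt), I would fix an arbitrary player $i$ at position $pos$ in the ordering and make the following adversarial observation: when $i$'s turn arrives in round $k$, at most $(k-1)n + (pos-1) \le kn - 1$ issues have been decided, so her best remaining issue has maximum value at least $u^{(kn)}_{max}(i)$. Summing over the $p = \lfloor m/n \rfloor$ rounds in which every player participates yields $u_i(\vc) \ge \sum_{k=1}^p u^{(kn)}_{max}(i) = \rrs_i$.

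For $\propone$, I would sharpen the per-round bound using $pos$ to $u^{((k-1)n + pos)}_{max}(i)$ and apply a standard averaging step (each such term bounds the average of the next $n$ consecutive sorted values, with indices beyond $m$ treated as zero) to obtain $u_i(\vc) \ge \prop_i - \frac{1}{n} \sum_{j=1}^{pos-1} u^{(j)}_{max}(i)$, so the deficit is strictly less than $u^{(1)}_{max}(i)$. To close the deficit with a single-issue change, I would examine a favorite issue $t^*$ of player $i$: if $t^*$ was picked by an earlier-positioned player in round 1, changing $t^*$'s alternative to $i$'s favorite yields gain $u^{(1)}_{max}(i) - u_i^{t^*}(\vc)$ that covers the deficit; otherwise $i$ picked $t^*$ herself, and I would couple the actual run with the ``virtual'' run in which $i$ is moved to position 1 (which always satisfies \prop) to show that another high-valued issue not chosen by $i$ supplies the required gain. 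The main obstacle is this latter sub-case: when $i$ herself picks her favorite, the natural candidate for the single change has zero gain, so identifying a suitable substitute issue requires a careful structural argument, especially when $i$'s top max-values are concentrated or tied at $u^{(1)}_{max}(i)$.
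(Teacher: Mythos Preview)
Your treatment of polynomial runtime and of \rrs is essentially the paper's argument: the observation that at player $i$'s $k$-th turn at most $kn-1$ issues have been decided, so her pick is worth at least $u^{(kn)}_{max}(i)$, is exactly what the paper uses.

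For \propone, however, your route diverges from the paper's and leaves the gap you yourself flag. Your plan hinges on changing player $i$'s globally favorite issue $t^*$; when that case works, the accounting is clean (the own-pick lower bound plus $u^{(1)}_{max}(i)$ covers $\prop_i$), but when $i$ already has her favorite alternative on $t^*$ the gain is zero and your ``virtual run at position~1'' coupling is only a sketch. It is not clear how to extract, from the actual run, a single issue whose reassignment recovers the full deficit in this sub-case, and the difficulty is real rather than cosmetic.

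The paper sidesteps this obstacle by choosing a different issue to change. Let $\ell$ be the largest index such that $\vc$ already selects player $i$'s favorite alternative on each of her top $\ell$ issues, and take $t^*$ to be her $(\ell+1)$-th favorite issue. After switching $t^*$ to $a^{t^*}_{max}(i)$, player $i$ has her favorite outcome on all of her top $\ell+1$ issues. If $\ell \ge p$, the top $p+1$ max-values already sum to at least $\prop_i$. If $\ell < p$, combine the top $\ell+1$ max-values with the \rrs-style guarantee for rounds $\ell+1,\ldots,p$ (crucially, the $(\ell+1)$-th favorite issue was \emph{not} among the issues the round robin method credited to $i$, so there is no double counting), yielding
\[
u_i(\vc') \ge \sum_{k=1}^{\ell+1} u^{(k)}_{max}(i) + \sum_{k=\ell+1}^{p} u^{(kn)}_{max}(i) \ge \prop_i.
\]
The point is that targeting the \emph{first unmet} favorite, rather than the global favorite, makes the ``$i$ already owns her favorite'' case disappear: by construction $t^*$ is never already satisfied, and the residual \rrs contribution is disjoint from the top-$(\ell+1)$ block. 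This is the missing idea in your proposal.
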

\begin{proof}
	The round robin mechanism clearly runs in polynomial time (note that it is easy for a player to choose the next \issue on which to determine the outcome). To see why it satisfies \rrs, note that the mechanism allows every player to make a decision on one issue once every $n$ turns. Thus, for each $k \in [p]$, every player gets to make decisions on at least $k$ of her ``top'' $k \cdot n$ issues, when issues are sorted in the descending order of the utility her favorite alternative in the issue gives her.  It is easy to see that this implies every player $i$ gets utility at least $\rrs_i$. Because \rrs implies \pps, the mechanism also satisfies \pps. It remains to show that it satisfies \propone as well.
	
	Fix a player $i$ and let $\vc$ be the outcome produced by the round robin mechanism for some choosing order of the players. Because the round robin mechanism satisfies \rrs, player $i$ gets utility at least 
	\[
		u_i(\vc) \ge \sum_{k=1}^{p} u^{(k\cdot n)}_{max}(i).
	\]
	
	
	For $k \in [m]$, let the $k^{\text{th}}$ favorite issue of player $i$ be the issue $t$ for which $u^t_{max}(i)$ is the $k^{\text{th}}$ highest. Let $\ell \in \bbN \cup \set{0}$ be the largest index such that for every $k \in [\ell]$, outcome $\vc$ chooses player $i$'s most preferred alternative on her $k^{\text{th}}$ favorite issue. Let $t^*$ be her $(\ell+1)^{\text{th}}$ favorite issue. To show that $\vc$ satisfies \propone, we construct outcome $\vec{c'}$ from $\vc$ by only changing the outcome of issue $t^*$ to $a^{t^*}_{max}(i)$, and show that $u_i(\vc') \ge \prop_i$.
	Note that if $\ell \ge p$, then 
	\[
	u_i(\vc') \ge \sum_{k=1}^{\ell +1} u^{(k)}_{max}(i) \ge \sum_{k=1}^{p+1} u^{(k)}_{max}(i) \ge \frac{1}{n} \sum_{k=1}^m u^{(k)}_{max}(i) = \prop_i. 
	\]
	Let $\ell < p$. Then, using the fact that the round robin mechanism lets player $i$ choose her most preferred alternative for at least $k$ of her favorite $k \cdot n$ issues for every $k \le p$ (and her $(\ell + 1)^{\text{th}}$ favorite \issue was not one of these), we have 
	\begin{align*}
	u_i(\vc') &\ge \sum_{k=1}^{\ell+1} u^{(k)}_{max}(i) + \sum_{k=\ell+1}^{p} u^{(k \cdot n)}_{max}(i) \\
		&\ge \frac{1}{n} \sum_{k=1}^{(\ell +1) \cdot n} u^{(k)}_{max}(i) + \frac{1}{n} \ \sum_{k=(\ell+1)n+1}^{m} u^{(k)}_{max}(i)
		= \prop_i .
		\end{align*}
Therefore, the round robin mechanism satisfies $\propone$.
\end{proof}

While this result seems to reflect favorably upon the round robin mechanism, recall that it violates Pareto optimality even for \pgd. For \seqdecision, a simple reason for this is that the round robin mechanism, for each issue, chooses an alternative that is some player's favorite, while it could be unanimously better to choose compromise solutions that make many players happy. Imagine there are two players and two issues, each with two alternatives. The ``extreme'' alternative in each issue $i \in \set{1,2}$ gives utility $1$ to player $i$ but $0$ to the other, while the ``compromise'' alternative in each issue $i \in \set{1,2}$ gives utility $2/3$ to both players. It is clear that both players prefer choosing the compromise alternative in both issues to choosing the extreme alternative in both issues. Because such ``Pareto improvements'' which make some players happier without making any player worse off are unanimously preferred by the players, the round robin outcome becomes highly undesirable. We therefore seek mechanisms that provide fairness guarantees while satisfying Pareto optimality.


A natural question is whether there exists a mechanism that satisfies \rrs, \propone, and PO. An obvious approach is to start from an outcome that already satisfies \rrs and \propone (e.g., the round robin outcome), and make Pareto improvements until no such improvements are possible. While Pareto improvements preserve \rrs as the utilities to the players do not decrease, \propone can be lost as it depends on the exact alternatives chosen and not only on the utilities to the players. We leave it as an important open question to determine if \rrs, \propone, and PO can be satisfied simultaneously.

We therefore consider satisfying each fairness guarantee individually with PO. One can easily find an outcome satisfying \rrs and PO by following the aforementioned approach of starting with an outcome satisfying \rrs, and making Pareto improvements while possible. 
There is also a more direct approach to satisfying \rrs and PO. Recall that the leximin mechanism chooses the outcome which maximizes the minimum utility to any player, subject to that maximizes the second minimum utility, and so on. It is easy to see that this mechanism is always Pareto optimal. Now, let us normalize the utilities of all players such that $\rrs_i = 1$ for every player $i \in N$.\footnote{Players with zero round robin share can be incorporated via a simple extension to the argument.} Because the round robin mechanism gives every player $i$ utility at least $\rrs_i = 1$, it must be the case that the leximin mechanism operating on these normalized utilities must also give every player utility at least $1$, and thus produce an outcome that is both \rrs and PO. 

\begin{theorem}\label{thm:leximin}
	The leximin mechanism satisfies \rrs, PO, and $(1/2)$-\propone.
\end{theorem}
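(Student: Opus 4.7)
\medskip

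\noindent\textbf{Proof plan.} The plan is to handle Pareto optimality and \rrs using the short arguments already sketched just before the theorem, and then to do the real work for $(1/2)$-\propone. Pareto optimality is immediate: any Pareto improvement over the leximin outcome would strictly increase some $u_i(\vc)$ without decreasing any other, and such a change lexicographically dominates the leximin outcome --- a contradiction. For \rrs, I would normalize the utilities of each player $i$ with $\rrs_i > 0$ by dividing by $\rrs_i$ (handling $\rrs_i = 0$ trivially, as promised in the earlier footnote); since Theorem~\ref{thm:round-robin} guarantees an outcome in which every normalized utility is at least $1$, the leximin outcome must also achieve minimum normalized utility at least $1$, which is precisely \rrs.

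The substantive step is $(1/2)$-\propone. The plan is to establish a single structural inequality relating the proportional share to the round robin share and the player's single best issue utility: for every player $i$,
\[
\prop_i \;\le\; \frac{1}{n}\,u^{(1)}_{max}(i) \;+\; \frac{1}{n}\cdot n \cdot \rrs_i \;=\; u^{(1)}_{max}(i) + \rrs_i.
\]
I would prove this by partitioning $[m]$ into the $p$ aligned blocks $B_j = \{(j-1)n+1,\ldots,jn\}$ together with the tail $\{pn+1,\ldots,m\}$ (which has fewer than $n$ elements). Using monotonicity of $u^{(k)}_{max}(i)$ in $k$, each block $B_j$ for $j \ge 2$ contributes at most $n\cdot u^{((j-1)n)}_{max}(i)$, the first block contributes at most $n\cdot u^{(1)}_{max}(i)$, and the tail contributes at most $n\cdot u^{(pn)}_{max}(i)$. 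Summing and dividing by $n$ yields exactly the inequality above, with the $\rrs_i$ term arising from $\sum_{l=1}^{p} u^{(ln)}_{max}(i)$.

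Given this inequality, $(1/2)\prop_i \le \max\{u^{(1)}_{max}(i),\,\rrs_i\}$, and I split into two cases. If $\rrs_i \ge u^{(1)}_{max}(i)$, then $u_i(\vc) \ge \rrs_i \ge (1/2)\prop_i$ by the already-established \rrs property, so any choice of issue (even one where the hypothetical change leaves the outcome unchanged in effect) witnesses $(1/2)$-\propone. Otherwise $u^{(1)}_{max}(i) > \rrs_i$, so $u^{(1)}_{max}(i) \ge (1/2)\prop_i$; letting $t^\star$ be an issue achieving $u^{t^\star}_{max}(i) = u^{(1)}_{max}(i)$, the witness is
\[
u_i(\vc) - u_i^{t^\star}(\vc) + u^{t^\star}_{max}(i) \;\ge\; u^{(1)}_{max}(i) \;\ge\; (1/2)\prop_i,
\]
since $u_i(\vc) \ge u_i^{t^\star}(\vc)$.

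The main obstacle is really just getting the block-partition bookkeeping right so that the RRS indices $n,2n,\ldots,pn$ emerge on the upper-bounding side; once that lemma is in hand, the rest is a clean two-line case split. Neither the leximin optimality nor PO enters the \propone argument beyond the \rrs guarantee on $u_i(\vc)$, so in fact the same proof shows that any \rrs outcome is already $(1/2)$-\propone.
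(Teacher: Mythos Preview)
Your proof is correct and follows essentially the same approach as the paper: PO and \rrs are handled exactly as sketched in the text preceding the theorem, and $(1/2)$-\propone is obtained by first proving the inequality $\prop_i \le u^{(1)}_{max}(i) + \rrs_i$ via a block bound on the sorted utilities, then taking a max and case-splitting. The paper isolates this last step as a standalone lemma (``\rrs implies $(1/2)$-\propone''), which you also observe at the end; its block bound is coarser (splitting $[m]$ into $\{1,\ldots,n\}$ and $\{n+1,\ldots,m\}$ rather than your aligned blocks plus tail) and its case split is on whether the player already holds her top issue rather than on which of $\rrs_i$ and $u^{(1)}_{max}(i)$ is larger, but these are cosmetic differences leading to the same inequality and conclusion.
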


That leximin satisfies $(1/2)$-\propone follows directly from the following lemma, and noting that leximin satisfies \rrs.

\begin{lemma}
	\rrs implies $(1/2)$-\propone. 
	\label{lem:rrs-propone}
\end{lemma}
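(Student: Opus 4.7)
I plan to prove the lemma by choosing, for each player $i$, the single ``repair'' issue carrying her globally best alternative, and then reducing the entire lemma to one structural inequality relating $\prop_i$, $\rrs_i$, and that best alternative's value.

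Fix a player $i$ and let $t^* \in T$ be an issue with $u^{t^*}_{max}(i) = u^{(1)}_{max}(i)$. Let $\vec{c'}$ be obtained from $\vc$ by setting $c'_{t^*} = a^{t^*}_{max}(i)$, so that $u_i(\vec{c'}) = u_i(\vc) - u_i^{t^*}(\vc) + u^{(1)}_{max}(i)$. Two elementary bounds --- $u_i^{t^*}(\vc) \le u^{t^*}_{max}(i) = u^{(1)}_{max}(i)$, and $u_i^{t^*}(\vc) \le u_i(\vc)$ by additivity and non-negativity --- together imply $u_i(\vec{c'}) \ge \max\{u_i(\vc),\, u^{(1)}_{max}(i)\}$. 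Combined with the \rrs hypothesis $u_i(\vc) \ge \rrs_i$, this upgrades to
\[
u_i(\vec{c'}) \;\ge\; \max\bigl\{\rrs_i,\, u^{(1)}_{max}(i)\bigr\} \;\ge\; \tfrac{1}{2}\bigl(\rrs_i + u^{(1)}_{max}(i)\bigr).
\]

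The lemma therefore reduces to the single inequality $\prop_i \le \rrs_i + u^{(1)}_{max}(i)$, which is the only non-trivial step. I would prove it by partitioning $\{1,\ldots,m\}$ into consecutive length-$n$ blocks $B_0 = \{1,\ldots,n\}$ and $B_j = \{jn+1,\ldots,(j+1)n\}$ for $1 \le j \le p-1$, together with a leftover block $B_p = \{pn+1,\ldots,m\}$ of size $r = m - pn < n$ (possibly empty). By monotonicity of the sorted sequence $u^{(\cdot)}_{max}(i)$, the sum over $B_0$ is at most $n\cdot u^{(1)}_{max}(i)$, the sum over each $B_j$ with $j \ge 1$ is at most $n\cdot u^{(jn)}_{max}(i)$, and the sum over $B_p$ is at most $r\cdot u^{(pn)}_{max}(i) \le n\cdot u^{(pn)}_{max}(i)$. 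Adding these contributions and dividing by $n$ yields $\prop_i \le u^{(1)}_{max}(i) + \rrs_i$, as required.

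The only obstacle is aligning the partition so that the block-wise upper bounds land exactly on the terms $u^{(n)}_{max}(i),\,u^{(2n)}_{max}(i),\ldots,u^{(pn)}_{max}(i)$ that define $\rrs_i$, and absorbing the short leftover block of size $r<n$ into the final $u^{(pn)}_{max}(i)$ term; neither should cause real difficulty. The factor $1/2$ in the conclusion enters precisely through the step $\max\{a,b\} \ge (a+b)/2$, which suggests the constant is essentially tight for an argument of this form.
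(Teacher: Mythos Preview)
Your proof is correct and follows essentially the same route as the paper: both establish the key structural inequality $\prop_i \le \rrs_i + u^{(1)}_{max}(i)$ via a block partition of the sorted maximum-utility sequence, and then combine it with the \rrs hypothesis and the single-issue repair at $t^*$ using $\max\{a,b\} \ge (a+b)/2$. Your handling of the repair step (bounding $u_i(\vec{c'})$ below by both $u_i(\vc)$ and $u^{(1)}_{max}(i)$ separately) is a little more explicit than the paper's two-case split, but the argument is otherwise identical.
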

\begin{proof}
	Note that 
	\[\rrs_i = \sum_{k=1}^p u_{max}^{(k \cdot n)}(i) \ge \frac{1}{n} \sum_{t=n+1}^m u_{max}^{(t)}(i)\] 
	and 
	\[u_{max}^{(1)}(i) \ge \frac{1}{n} \sum_{t=1}^n u_{max}^{(t)}(i).\] 
	Summing the two equations, we get
	\[ \rrs_i + u_{max}^{(1)}(i) \ge \frac{1}{n} \sum_{t=1}^m u_{max}^{(t)}(i) = \prop_i. \]
	Therefore, $\max \{ \rrs_i, u_{max}^{(1)}(i) \} \ge \frac{1}{2} \prop_i$. 
	
	Suppose that $u_i(\vc) \ge \rrs_i$ for some outcome $\vc$. Then either $i$ already receives her most valued item, in which case she receives utility at least $\max \{ \rrs_i, u_{max}^{(1)}(i) \} \ge \frac{1}{2} \prop_i$, or she does not receive her most valued item. If she does not, then after giving it to her, she receives utility at least $\frac{1}{2} \prop_i$. Therefore, $\vc$ satisfies $(1/2)$-\propone.
	\end{proof}

Next, we study whether we can achieve \propone and PO simultaneously. Neither of the previous approaches seems to work: we already argued that following Pareto improvements could lose \propone, and the normalization trick is difficult to apply because \propone is not defined in terms of any fixed share of utility. 

One starting point to achieving \propone and PO is the maximum Nash welfare (MNW) solution, which, for \pgd, is known to satisfy the similar guarantee of EF1 and PO~\cite{CKMP+16}. It turns out that the MNW solution is precisely what we need. 

\begin{theorem}
\label{thm:mnw-prop1}
The MNW solution satisfies proportionality up to one issue (\propone) and Pareto optimality (PO).
\end{theorem}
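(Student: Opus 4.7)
The plan is to prove Pareto optimality and Prop1 separately for an MNW outcome $\vc$.

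Pareto optimality follows by a direct argument. If some outcome $\vec{c'}$ Pareto-dominated $\vc$, then every player's utility would weakly increase with at least one strict increase. In the primary case where $u_j(\vc) > 0$ for all $j$, this strictly increases the product $\prod_j u_j$, contradicting that $\vc$ maximizes the Nash welfare; and when some $u_j(\vc) = 0$, the same conclusion follows using the MNW tiebreaking rule, which first maximizes the set of players receiving positive utility and then maximizes the product restricted to that set.

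For Prop1 I argue by contradiction. Suppose some player $i$ violates Prop1 under $\vc$: for every issue $t \in T$, $u_i(\vc) + \delta_i^t < \prop_i$, where $\delta_i^t := u^t_{max}(i) - u_i^t(c_t) \ge 0$. Summing over $t$ gives $n\,\prop_i - u_i(\vc) = \sum_t \delta_i^t < m\,(\prop_i - u_i(\vc))$, hence $u_i(\vc) < \prop_i$ and every $\delta_i^t < \prop_i - u_i(\vc)$. The rest of the plan is to combine these bounds with the MNW optimality condition so as to exhibit a single-issue swap that strictly increases Nash welfare, contradicting optimality. The central tool is the per-issue inequality
\[
\left(1 + \frac{\delta_i^t}{u_i(\vc)}\right) \prod_{j \neq i} \left(1 + \frac{\delta_j^t}{u_j(\vc)}\right) \le 1,
\]
obtained by comparing $\vc$ with the outcome that swaps $c_t$ to $a^t_{max}(i)$; here $\delta_j^t := u_j^t(a^t_{max}(i)) - u_j^t(c_t)$ may well be negative for $j \neq i$. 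The approach is to pick a specific issue $t^*$---for instance, the one maximizing $\delta_i^{t^*}$, or the one where the aggregate loss to other players is smallest---aggregate these per-issue inequalities across $t \in T$ via logarithmic linearization, Jensen's inequality, or a pigeonhole argument identifying a consistently ``blocking'' player, and conclude that the inequality must in fact reverse, yielding the desired contradiction.

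The main obstacle is bridging the multiplicative, many-player MNW condition with the additive, single-player Prop1 condition. In the familiar EF1 proof for private goods, each relevant swap affects only two players, reducing to a tractable two-variable inequality; in public decisions, a single swap can simultaneously change many players' utilities, so the factor $\prod_{j \neq i}(1 + \delta_j^t/u_j(\vc))$ must be controlled delicately as a function of $\delta_i^t$. A potentially useful auxiliary device is the ``dictator outcome'' $\vec{c^i}$ defined by $c^i_t = a^t_{max}(i)$ for every $t$, for which $u_i(\vec{c^i}) = n\,\prop_i$; the MNW inequality $u_i(\vc)\prod_{j\neq i} u_j(\vc) \ge n\,\prop_i \prod_{j\neq i} u_j(\vec{c^i})$ globally ties the behavior of the other players' utilities across all swapped issues, which may be what is needed to close the contradiction.
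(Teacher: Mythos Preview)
Your Pareto-optimality argument is essentially the paper's and is correct.

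For \propone, you have the right overall strategy --- assume a violation for player $i$, then exhibit a single-issue swap that strictly raises the Nash welfare --- but the proof is not yet there. The sketch lists several possible aggregation devices (Jensen, logarithmic linearization, pigeonhole, the dictator comparison) without committing to or executing any of them, and none of them as stated actually isolates the required issue $t^*$. In particular, ``maximize $\delta_i^{t^*}$'' and ``minimize the aggregate loss to the others'' are both the wrong criteria: the paper chooses $t^*$ to minimize the \emph{ratio}
\[
\frac{\sum_{j\in N\setminus\{i\}} u_j^t(\vc)}{u^t_{max}(i)-u_i^t(\vc)},
\]
and then uses the mediant bound $\min_t a_t/b_t \le (\sum_t a_t)/(\sum_t b_t)$ together with the normalization $\prop_i=1$, $u_j(\vc)=1$ for $j\in S\setminus\{i\}$ to show this ratio is at most $1$. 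That bounds the \emph{sum} of the other players' losses at $t^*$ by $\delta := u^{t^*}_{max}(i)-u_i^{t^*}(\vc) < 1$, and a separate product-control lemma (if $\sum_j \max\{0,1-x_j\}\le\delta<1$ then $\prod_j x_j\ge 1-\delta$) converts that additive bound into the multiplicative one needed to finish. Your per-issue inequality $\prod_j(1+\delta_j^t/u_j(\vc))\le 1$ is a \emph{consequence} of MNW optimality and therefore holds at every $t$; the entire content of the argument is to show, under the \propone violation, that a direct estimate of the left side at $t^*$ already exceeds $1$ --- and that requires the ratio choice plus the product lemma, neither of which appears in your sketch.

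Two smaller gaps: you need $u_i(\vc)>0$ (otherwise your per-issue inequality divides by zero), which the paper establishes via a pigeonhole argument showing that a \propone-violating player must have $u^t_{max}(i)>0$ on at least $n+1$ issues and hence must lie in the positive-utility set $S$; and your inequality implicitly assumes every $u_j(\vc)>0$, whereas in general one must restrict attention to $S$ throughout.
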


Before we prove this, we need a folklore result, which essentially states that if the sum of $n$ terms is to be reduced by a fixed quantity $\delta$ that is less than each term, then their product reduces the most when $\delta$ is taken out of the lowest term. The following lemma proves this result when all initial terms are $1$, which is sufficient for our purpose. The proof of the lemma appears in the appendix. 

\begin{lemma}
Let $\set{x_1,\ldots,x_n}$ be a set of $n$ non-negative real numbers such that\\ $\sum_{i=1}^n \max \{ 0, 1- x_i\} \le \delta$, where $0 < \delta < 1$. Then, $\prod_{i=1}^n x_i \ge 1-\delta$.
\label{LEM:MAX-REDUCTION}
\end{lemma}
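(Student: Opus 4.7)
The plan is to first reduce to the case where every $x_i$ lies in $[0,1]$, and then prove a slightly sharper statement by a straightforward induction.

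The first step (\emph{reduction}) exploits an asymmetry in the hypothesis: if some $x_i > 1$, then $\max\{0, 1-x_i\} = 0$, so replacing that $x_i$ by $1$ preserves the constraint $\sum_j \max\{0, 1-x_j\} \le \delta$ while only decreasing $\prod_j x_j$ (every other factor being non-negative). Since we seek a lower bound on the product, it suffices to prove the lemma under the stronger assumption $x_i \in [0,1]$ for every $i$. In this regime the hypothesis simplifies to $\sum_{i=1}^n (1-x_i) \le \delta$.

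The second step is to prove, by induction on $n$, the sharper inequality
\[
\prod_{i=1}^n x_i \ge 1 - \sum_{i=1}^n (1-x_i),
\]
valid whenever $x_1,\ldots,x_n \in [0,1]$ with $\sum_i (1-x_i) \le 1$. The base case $n=1$ is the identity $x_1 = 1-(1-x_1)$. For the inductive step, set $S = \sum_{i=1}^{n-1}(1-x_i) \in [0,1]$. The inductive hypothesis gives $\prod_{i=1}^{n-1} x_i \ge 1 - S \ge 0$, so multiplying by $x_n \ge 0$ yields
\[
\prod_{i=1}^n x_i \ge x_n(1-S) = 1 - S - (1-x_n) + (1-x_n)S \ge 1 - S - (1-x_n),
\]
where the last inequality uses $(1-x_n)S \ge 0$. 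The right-hand side equals $1 - \sum_{i=1}^n(1-x_i)$, closing the induction. Combining with the hypothesis $\sum(1-x_i) \le \delta$ then gives $\prod_i x_i \ge 1 - \delta$, as required.

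There is no real obstacle once the reduction is made. The one delicate point is arranging things so that the inductive hypothesis yields a \emph{non-negative} lower bound, ensuring that multiplication by $x_n \ge 0$ preserves the direction of the inequality; the assumption $\delta < 1$ is used precisely to guarantee $1 - S \ge 0$ throughout the induction. (A slicker phrasing is that $(1-a)(1-b) \ge 0$ for $a,b \in [0,1]$ expands to $1 - a - b \le (1-a)(1-b)$, and iterating this over the $n$ deficits $1-x_i$ gives the claim directly; the induction is just the bookkeeping form of this identity.)
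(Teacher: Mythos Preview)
Your proof is correct and takes a genuinely different route from the paper's. The paper argues extremally: it shows that if the smallest coordinate exceeds $1-\delta$, one can perturb the feasible set (either shrinking the smallest entry to $1-\delta$ outright, or shifting mass from the smallest to the second-smallest entry) to strictly decrease the product, and then checks that any feasible set with minimum entry exactly $1-\delta$ has product at least $1-\delta$. Your argument is instead a direct algebraic one: reduce to $x_i\in[0,1]$ by clipping, and then invoke (via induction) the Weierstrass-type inequality $\prod_i x_i \ge 1 - \sum_i (1-x_i)$. Your approach is shorter and more elementary, and it makes transparent exactly where the hypothesis $\delta<1$ is needed (to keep the inductive lower bound non-negative so that multiplying by $x_n\ge 0$ preserves the inequality). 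The paper's argument, by contrast, identifies the extremal configuration explicitly --- one coordinate at $1-\delta$ and the rest at least $1$ --- which is informative but not needed for the applications in the paper.
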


\begin{proof}[Proof of Theorem~\ref{thm:mnw-prop1}]
Fix an instance of the \seqdecision problem. Let $S \subseteq N$ be the set of players that the MNW outcome $\vc$ gives positive utility to. Then, by the definition of the MNW outcome, $S$ must be a largest set of players that can simultaneously be given positive utility, and $\vc$ must maximize the product of utilities of players in $S$. 

First, we show that $\vc$ is PO. Note that a Pareto improvement over $\vc$ must either give a positive utility to a player in $N\setminus S$ or give more utility to a player in $S$, without reducing the utility to any player in $S$. This is a contradiction because it violates either optimality of the size of $S$ or optimality of the product of utilities of players in $S$. Hence, MNW satisfies PO. 

We now show that MNW also satisfies \propone. Suppose for contradiction that \propone is violated for player $i$ under $\vc$. First, note that we must have $\prop_i > 0$. Further, it must be the case that $u^t_{max}(i)>0$ for at least $n+1$ \issues. Were this not the case, $\propone$ would be trivially satisfied for player $i$ since we can give her utility 
\[ u_{max}^{(1)}(i) \ge \frac{1}{n} \sum_{t=1}^n u_{max}^{(t)}(i) = \frac{1}{n} \sum_{t=1}^m u_{max}^{(t)}(i) = \prop _i \]
by changing the outcome on a single \issue. 

We now show that $u_i(\vc)>0$ (i.e., $i \in S$). For contradiction, suppose otherwise. For each of the (at least) $n+1$ \issues with $u^t_{max}(i)>0$, there must exist another player $j \not= i$ that gets positive utility \emph{only} from that issue under $\vc$ (otherwise we could use that \issue to give positive utility to $i$ while not reducing any other agents' utility to zero, contradicting the maximality of $S$). But this is impossible, since there are at least $n+1$ \issues and only $n-1$ agents (other than $i$).

Because MNW outcomes and the \propone property are invariant to individual scaling of utilities, let us scale the utilities such that $\prop_i = 1$ and $u_{j}(\vc)=1$ for all $j \in S\setminus\set{i}$. Select issue $t^* \in T$ as 
\[ 
t^* \in \argmin_{t \in T} \frac{\sum_{j \in N\setminus\set{i}} u_j^t(\vc)}{u^t_{max}(i)-u_i^t(\vc)}.
\]
Note that $t^*$ is well defined because $u^t_{max}(i) > u_i^t(\vc)$ for at least one $t \in T$, otherwise \propone would not be violated for player $i$.

We now show that outcome $\vec{c'}$ such that $c'_{t^*} = a^{t^*}_{max}(i)$ and $c'_t = c_t$ for all $t \in T\setminus\set{t^*}$ achieves strictly greater product of utilities of players in $S$ than outcome $\vc$ does, which is a contradiction as $\vc$ is an MNW outcome. First, note that
\begin{align}
 \frac{ \sum_{j \in N\setminus\set{i}} u_{j}^{t^*}(\vc) }{ u^{t^*}_{max}(i) - u_i^{t^*}(\vc) }
 \le \frac{ \sum_{t \in T} \sum_{j \in N\setminus\{i\}} u_{j}^t(\vc) }{ \sum_{t \in T} (u^t_{max}(i) - u_i^t(\vc)) } 
 = \frac{ \sum_{j \in N\setminus\set{i}} u_{j}(\vc) }{ n \prop_i - u_i(\vc) } 
 \le \frac{ (n-1) }{ (n-1)\prop_i } 
 = 1,
 \label{eqn:ineq-decrease}
\end{align}
where the penultimate transition follows because we normalized utilities to achieve $u_j(\vc) = 1$ for every $j \in S\setminus\set{i}$, every $j \in N\setminus S$ satisfies $u_j(\vc) = 0$, and player $i$ does not receive her proportional share. The final transition holds due to our normalization $\prop_i = 1$. 

Let $\delta = \sum_{j \in S\setminus\set{i}} u_j^{t^*}(\vc)$. Then, Equation~\eqref{eqn:ineq-decrease} implies $u_i(\vec{c'})-u_i(\vc) = u^{t^*}_{max}(i) - u^{t^*}_i(\vc) \ge \delta$. Thus, 
\begin{align}
u_i(\vc) + \delta \le u_i(\vec{c'}) < 1,
\label{eqn:i-increase}
\end{align}
where the last inequality follows because the original outcome $\vc$ violated \propone for player $i$. In particular, this implies $\delta < 1$. Our goal is to show that $\prod_{j \in S} u_j(\vec{c'}) > u_i(\vc) = \prod_{j \in S} u_j(\vc)$,
where the last equality holds due to our normalization $u_j(\vc) = 1$ for $j \in S\setminus\set{i}$ and because $i \in S$. This would be a contradiction because $\vc$ maximizes the product of utilities of players in $S$. Now, 
\begin{align*}
	 \sum_{j \in S \setminus \{ i \}} \max \{0, 1-u_j(\vc') \} = \sum_{j \in S \setminus \{ i \}} \max \{0, u_j^{t^*}(\vc)-u_j^{t^*}(\vc') \} \le  \sum_{j \in S \setminus \{ i \}} u_j^{t^*}(\vc) = \delta,
\end{align*}
where the first transition follows from setting $1=u_j(\vc)$ (by our normalization) and noting that $\vc$ and $\vc'$ are identical for all \issues except $t^*$, and the second because all utilities are non-negative.


Hence, Lemma~\ref{LEM:MAX-REDUCTION} implies that $\prod_{j \in S\setminus\set{i}} u_j(\vc') \ge 1-\delta$. Thus,
\[
\prod_{j \in S} u_j(\vc') \ge (1-\delta) \cdot (u_i(\vc)+\delta) = u_i(\vc) + \delta \cdot (1-u_i(\vc)) - \delta^2 > u_i(\vc) + \delta^2 - \delta^2 = u_i(\vc),
\] 
where the inequality holds because $u_i(\vc)+\delta < 1$ from Equation~\eqref{eqn:i-increase}.
%
%
\end{proof}

For \pgd, this result can be derived in a simpler fashion. \citet{CKMP+16} already show that MNW satisfies PO. In addition, they also show that MNW satisfies EF1, which implies \propone due to our next result. To be consistent with the goods division literature, we use proportionality up to one good (rather than one issue) in the \pgd context. 

\begin{lemma}
For \pgd, envy-freeness up to one good (EF1) implies proportionality up to one good (\propone).
\label{lem:pgd-ef1-prop1}
\end{lemma}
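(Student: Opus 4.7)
The plan is to mimic the classical argument that envy-freeness implies proportionality under non-wastefulness, but localized so that the ``up to one good'' slack on the envy side transforms into an ``up to one good'' slack on the proportionality side.

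First, I would fix an arbitrary player $i$. By EF1, for every $j \neq i$ there is some good $g_j \in A_j$ such that $u_i(A_i) \ge u_i(A_j \setminus \{g_j\})$, i.e.\ $u_i(A_j) \le u_i(A_i) + u_i(g_j)$. The natural single good to hand to $i$ is then the one that is most valuable to $i$ among these witnesses: set $g^* \in \argmax_{j \neq i} u_i(g_j)$. Note $g^* \in A_j$ for some $j \neq i$, so by non-wastefulness $g^* \notin A_i$, and transferring $g^*$ to $i$ corresponds exactly to the single-issue change allowed by \propone in the issue encoding described in the model section.

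Next I would sum the EF1 inequalities. For each $j \neq i$,
\[
u_i(A_j) \;\le\; u_i(A_i) + u_i(g_j) \;\le\; u_i(A_i) + u_i(g^*).
\]
Summing over the $n-1$ players $j \neq i$ and using non-wastefulness ($\sum_{j \neq i} u_i(A_j) = u_i(M) - u_i(A_i)$) gives
\[
u_i(M) - u_i(A_i) \;\le\; (n-1)\bigl(u_i(A_i) + u_i(g^*)\bigr),
\]
from which $u_i(M) \le n\,u_i(A_i) + (n-1)\,u_i(g^*) \le n\bigl(u_i(A_i) + u_i(g^*)\bigr)$, i.e.
\[
u_i(A_i) + u_i(g^*) \;\ge\; \tfrac{1}{n}\, u_i(M) \;=\; \prop_i.
\]
Since $g^* \notin A_i$, the allocation modified by reassigning $g^*$ to $i$ delivers $i$ utility at least $\prop_i$, which is exactly the \propone condition for player $i$. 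As $i$ was arbitrary, \propone holds.

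There is essentially no hard step here; the only thing to be careful about is the bookkeeping that the ``single good'' chosen for \propone is the same across all $j$ (rather than a per-$j$ witness from EF1), which is handled cleanly by taking the maximum-$u_i$-value good $g^*$ among the EF1 witnesses $\{g_j\}_{j \neq i}$ and using non-wastefulness so that $u_i(M) = u_i(A_i) + \sum_{j \neq i} u_i(A_j)$.
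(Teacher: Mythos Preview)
Your proposal is correct and is essentially the paper's own argument: sum the EF1 inequalities over all $j \neq i$, use non-wastefulness to turn the left side into $u_i(M)-u_i(A_i)$, and let the single \propone good be the maximum-value EF1 witness $g^* \in \argmax_{j\neq i} u_i(g_j)$. The only cosmetic difference is that you upper-bound each $u_i(g_j)$ by $u_i(g^*)$ \emph{before} summing, whereas the paper sums first to obtain $u_i(A_i)+u_i(X)/n \ge \prop_i$ and then observes $u_i(X)/n \le u_i(g^*)$; the two are equivalent.
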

\begin{proof}
Take an instance of \pgd with a set of players $N$ and a set of goods $M$. Let $\vA$ be an allocation satisfying EF1. Fix a player $i \in N$. 

Due to the definition of EF1, there must exist\footnote{If $A_j = \emptyset$, we can add a dummy good $g_j$ that every player has utility $0$ for, and make $A_j = \set{g_j}$.} a set of goods $X = \set{g_j}_{j \in N\setminus\set{i}}$ such that $u_i(A_i) \ge u_i(A_j) - u_i(g_j)$ for every $j \in N\setminus\set{i}$. Summing over all $j \in N\setminus\set{i}$, we get
\begin{align}
(n-1) \cdot u_i(A_i) \ge \left(\sum_{j \in N\setminus\set{i}} u_i(A_j) \right)- u_i(X) &\implies n \cdot u_i(A_i) \ge u_i(M)-u_i(X) \nonumber \\
&\implies u_i(A_i) + \frac{u_i(X)}{n} \ge \frac{u_i(M)}{n}. \label{eqn:ef-implies-prop1}
\end{align}
Note that $X$ has less than $n$ goods. Suppose player $i$ receives good $g^* \in \argmax_{g \in X} u_i(g)$. Note that $g^* \notin A_i$. Then, we have $u_i(A_i \cup \set{g^*}) \ge u_i(M)/n = \prop_i$, which implies that \propone is satisfied with respect to player $i$. Because player $i$ was chosen arbitrarily, we have that EF1 implies \propone.
\end{proof}

Equation~\ref{eqn:ef-implies-prop1} in the proof of Lemma~\ref{lem:pgd-ef1-prop1} directly implies the following lemma because the set $X$ in the equation contains at most $n-1$ goods.

\begin{lemma} \label{lem:ef1}
	Let $\vA$ be an allocation of private goods that satisfies EF1. Then, for every player $i$, 
	\[ 
	u_i(A_i) \ge \prop_i - \frac{1}{n} \sum_{t=1}^{n-1} u^{(t)}_{max}(i), 
	\]
	where $u^{(t)}_{max}(i)$ is the utility player $i$ derives from her $t^{\text{th}}$ most valued good.
\end{lemma}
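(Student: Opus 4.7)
The plan is to directly reuse the argument already carried out in the proof of Lemma~\ref{lem:pgd-ef1-prop1}, since the author explicitly flags that Equation~\eqref{eqn:ef-implies-prop1} is the key. Recall that starting from EF1, we constructed, for a fixed player $i$, a set $X = \{g_j\}_{j \in N \setminus \{i\}}$ of ``witness'' goods (one good $g_j \in A_j$ per other player $j$) such that $u_i(A_i) \ge u_i(A_j) - u_i(g_j)$ for every $j \ne i$. Summing over $j$ and rearranging yields
\[ u_i(A_i) + \frac{u_i(X)}{n} \ge \frac{u_i(M)}{n} = \prop_i, \]
which is precisely Equation~\eqref{eqn:ef-implies-prop1}.

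From here I would simply bound $u_i(X)$ from above. The set $X$ has at most $n-1$ goods (exactly one per player other than $i$), and $u_i$ is additive with non-negative values, so
\[ u_i(X) \le \sum_{t=1}^{n-1} u^{(t)}_{max}(i), \]
because the right-hand side is the maximum possible $u_i$-value of any set of $n-1$ (or fewer) goods in $M$. In the \pgd reduction each ``issue'' corresponds to a good and $u_{max}^{(t)}(i)$ is just the utility of $i$'s $t^{\text{th}}$ most valued good, so this bound applies.

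Combining the two inequalities gives
\[ u_i(A_i) \ge \prop_i - \frac{u_i(X)}{n} \ge \prop_i - \frac{1}{n}\sum_{t=1}^{n-1} u^{(t)}_{max}(i), \]
as desired. Since $i$ was arbitrary, the bound holds for every player. There is no real obstacle here — the work has already been done in Lemma~\ref{lem:pgd-ef1-prop1}; the only new observation is the crude but correct bound $u_i(X) \le \sum_{t=1}^{n-1} u^{(t)}_{max}(i)$, which follows immediately from $|X| \le n-1$ and additivity of $u_i$.
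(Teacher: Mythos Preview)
Your proposal is correct and matches the paper's own justification essentially verbatim: the paper simply notes that Equation~\eqref{eqn:ef-implies-prop1} from the proof of Lemma~\ref{lem:pgd-ef1-prop1}, together with the fact that $|X| \le n-1$, immediately yields the bound. The only additional step you make explicit---that $u_i(X) \le \sum_{t=1}^{n-1} u^{(t)}_{max}(i)$ because any $(n-1)$-element subset has value at most that of the top $n-1$ goods---is exactly the observation the paper leaves implicit.
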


Next, we turn our attention to \rrs and \pps. While MNW does not satisfy either of them, it approximates both. 

\begin{theorem}
The MNW solution satisfies $1/n$-\rrs (and therefore $1/n$-\pps). The approximation is tight for both \rrs and \pps up to a factor of $O(\log n)$. 
\label{THM:MNW-RRS-PPS}
\end{theorem}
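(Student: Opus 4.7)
The plan is to handle two separate claims: the positive bound ($u_i(\vec c) \ge \rrs_i/n$ under MNW) and a tightness construction showing the $1/n$ factor cannot be improved by more than $O(\log n)$.

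For the positive bound, I would mirror the structure of the proof of Theorem~\ref{thm:mnw-prop1}, arguing by contradiction. Let $\vec c$ be an MNW outcome with support $S$, and fix player $i$. First handle the case $i \notin S$: since the round-robin outcome produces an outcome in which every player $j$ receives utility $\ge \rrs_j$, if $\rrs_i > 0$ then all players with positive round-robin share can simultaneously obtain positive utility, so $i$ must lie in the maximum support $S$ (otherwise $S$ would not have maximum cardinality). Rescaling so $u_j(\vec c) = 1$ for every $j \in S$, the contradiction hypothesis becomes $\rrs_i > n$. The goal is then to exhibit a deterministic outcome $\vec c^*$ with $\prod_j u_j(\vec c^*) > 1$, contradicting MNW optimality.

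Following the template of Theorem~\ref{thm:mnw-prop1}, I would select the swap issue $t^*$ minimizing
\[
\frac{\sum_{j \in S \setminus \{i\}} u_j^t(\vec c)}{u^t_{max}(i) - u_i^t(\vec c)}
\]
among issues with $u^t_{max}(i) > u_i^t(\vec c)$; then Lemma~\ref{LEM:MAX-REDUCTION} yields $\prod_{j \neq i} u_j(\vec c^*) \ge 1 - \delta$ with $\delta = \sum_{j \neq i} u_j^{t^*}(\vec c)$, and once the ratio is bounded by $1$ we obtain $u_i(\vec c^*) \ge u_i(\vec c) + \delta$, so that $\prod_j u_j(\vec c^*) \ge (1 - \delta)(u_i(\vec c) + \delta) > u_i(\vec c) = 1$. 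The main obstacle is that the clean averaging used in Theorem~\ref{thm:mnw-prop1} yields a ratio bound of $(n-1)/(n\prop_i - u_i(\vec c))$ which is helpful only when $\prop_i$ is close to $1$; when $\rrs_i > n$, a single-issue swap can push $u_i(\vec c^*)$ past $1$ and the Lemma~\ref{LEM:MAX-REDUCTION} bookkeeping collapses. To circumvent this, I would restrict the averaging to player $i$'s top $pn$ issues — precisely the ones whose maxima contribute to $\rrs_i = \sum_{k=1}^p u^{(kn)}_{max}(i)$ — and exploit the inequality $n \cdot \rrs_i \le \sum_{\ell=1}^{pn} u^{(\ell)}_{max}(i)$ to extract a sub-collection of issues whose swaps incrementally raise $u_i$ past $\rrs_i/n$ while each individual step keeps $u_i(\vec c^*) < 1$, so that Lemma~\ref{LEM:MAX-REDUCTION} continues to apply on each swap and the product strictly increases.

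For tightness, the plan is to construct an explicit instance with $n$ players where the MNW outcome gives some player $i$ utility only $\Theta(\log n / n) \cdot \rrs_i$. A promising structure uses one distinguished player with uniform unit valuations spread over many issues, and the remaining $n-1$ players with valuations arranged in a geometric progression on disjoint sets of issues, so that maximizing $\sum_j \log u_j$ forces a harmonic-type balance and leaves the distinguished player with $\Theta(\log n)$ utility against an $\rrs_i = \Theta(n)$ (hence a ratio $\Theta(\log n / n)$). The main challenge will be calibrating the geometric ratio and the number of issues per player so that the MNW outcome can be computed explicitly and the gap is provably $\Theta(\log n)$ rather than a larger or smaller factor; the same instance should simultaneously witness tightness for \pps, since $\pps_i \le \rrs_i$ and the construction can be chosen to satisfy $\pps_i = \Theta(\rrs_i)$.
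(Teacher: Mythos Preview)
Your lower-bound plan has a genuine gap. After normalizing so that $u_j(\vec c)=1$ for all $j\in S$, the contradiction hypothesis reads $\rrs_i>n$; you correctly note that the single-swap averaging from the \propone proof does not transfer, but your proposed fix---a sequence of swaps ``each keeping $u_i(\vec c^*)<1$''---cannot work as stated: under this normalization $u_i(\vec c)=1$ already, and every swap in $i$'s favour only raises her utility. The inequality $u_i(\vec c)+\delta<1$ in Theorem~\ref{thm:mnw-prop1} was precisely what made $(1-\delta)(u_i(\vec c)+\delta)>u_i(\vec c)$; here no such bound is available, and an iterative argument of this shape does not go through. The idea you are missing is a single \emph{simultaneous} swap of $p$ issues: partition player~$i$'s top $pn$ issues into $p$ consecutive blocks of $n$ (ordered by $u^{(k)}_{max}(i)$) and from the $q$-th block pick the issue $t_q$ minimizing $\sum_{j\in S\setminus\{i\}} u_j^t(\vec c)$. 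Swapping all the $t_q$ to $a^{t_q}_{max}(i)$ at once gives $u_i(\vec c')\ge\sum_q u^{(t_q)}_{max}(i)\ge\rrs_i>n$, while the per-block minimum combined with $\sum_{j\in S\setminus\{i\}} u_j(\vec c)\le n-1$ bounds the total damage to the other players by $(n-1)/n<1$; a single application of Lemma~\ref{LEM:MAX-REDUCTION} then yields $\prod_{j\in S\setminus\{i\}} u_j(\vec c')\ge 1/n$, so the Nash product exceeds $n\cdot(1/n)=1$.

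Two smaller points. Your argument for $i\in S$ via the round-robin outcome is incomplete: it shows that the set $\{j:\rrs_j>0\}$ can simultaneously receive positive utility, but MNW may select a different maximum-cardinality $S$ not containing $i$; what is needed is the pigeonhole argument that $\rrs_i>0$ gives $i$ at least $n$ issues with positive top utility, one of which can be switched in $i$'s favour without zeroing out any of the at most $n-1$ other players in the given $S$. For the upper bound, the paper's construction is more direct than a geometric progression: take $n$ issues with two alternatives each, player $1$ valuing $(1,d)$ on every issue, player $t\ge 2$ valuing $a_2^t$ at $1$, and every $j\ge 2$ valuing $a_2^1$ at $x=(\log n-\log\log n)/n$; for $d=\Theta((\log n)/n)$ one checks by direct comparison that MNW chooses $a_2^t$ everywhere, so $u_1=nd=\Theta((\log n)/n)$ against $\pps_1=1$.
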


%

\begin{proof}
	We first show the lower bound. Fix an instance of \seqdecision, and let $\vc$ denote an MNW outcome. Let $S \subseteq N$ denote the set of players that achieve positive utility under $\vc$. 
	
	Without loss of generality, let us normalize the utilities such that $u_j(\vc)=1$ for every $j \in S$. Suppose for contradiction that for some player $i$, $u_i(\vc) < (1/n) \cdot \rrs_i$. First, this implies that $\rrs_i > 0$, which in turn implies that player $i$ must be able to derive a positive utility from at least $n$ different issues. By an argument identical to that used to argue that $u_i(\vc) > 0$ in the proof of Theorem~\ref{thm:mnw-prop1}, it can be shown that we must also have $u_i(\vc) > 0$ in this case (i.e., $i \in S$).
	
	Now, recall that the sequence $\langle u^{(1)}_{max}(i),\ldots,u^{(m)}_{max}(i) \rangle$ contains the maximum utility player $i$ can derive from different \issues, sorted in a non-ascending order. 
	For every $q \in [p]$, let 
	\[ t_q= \argmin_{(q-1)n+1 \le t \le qn } \sum_{j \in S \backslash \{ i \}} u_j^t(\vc). \]
That is, we divide the \seqdecision into sets of $n$ \issues, grouped by player $i$'s maximum utility for them, and for each set of \issues, we let $t_q$ be the one that the remaining players derive the lowest total utility from. Note that $t_q \le qn$ for each $q \in [p]$, and therefore $u^{(t_q)}_{max}(i) \ge u^{(qn)}_{max}(i)$. 

We will show that outcome $\vec{c'}$, where $c'_{t_q}=a^{t_q}_{max}(i)$ for all $q \in [p]$ and $c'_t=c_t$ for all other issues $t$, achieves a higher product of utilities to players in $S$ than $\vc$ does, which is a contradiction because $\vc$ is an MNW outcome.
First, note that 
\[ u_i(\vc ') \ge \sum_{q=1}^p u^{(t_q)}_{max}(i) \ge \sum_{k=1}^{p} u^{(k\cdot n)}_{max}(i) = \rrs_i > n.\]
Further, we have
\begin{align*}
	\sum_{j \in S \backslash \{ i \}} \max \{ 0, 1-u_j(\vc') \} &= \sum_{j \in S \backslash \{ i \}} \max \{ 0, u_j(\vc)-u_j(\vc') \}\\
	&= \sum_{j \in S \backslash \{ i \}} \sum_{q=1}^p \max \{ 0, u_j^{t_q}(\vc)-u_j^{t_q}(\vc') \}
	\le \sum_{j \in S \backslash \{ i \}} \sum_{q=1}^p u_j^{t_q}(\vc),
\end{align*}
	where the first equality follows from our normalization, the second because $\vc$ and $\vc'$ only differ on \issues $\set{t_q}_{q \in [p]}$, and the last because all utilities are non-negative.

Reversing the order of the summation and further manipulating the expression, we have
	\[ \sum_{q=1}^p \sum_{j \in S \backslash \{ i \}} u_j^{t_q}(\vc)
	\le \sum_{q=1}^p \frac{1}{n} \sum_{t=(q-1)n+1}^{qn}  \sum_{j \in S \backslash \{ i \}} u_j^t(\vc)
	= \frac{1}{n} \sum_{t=1}^{pn} \sum_{j \in S \backslash \{ i \}} u_j^t(\vc)
	\le \frac{n-1}{n}, \]
 where the first transition follows from the definition of $t_q$. By Lemma~\ref{LEM:MAX-REDUCTION}, we have
\[
\prod_{j \in S} u_j(\vc') = u_i(\vc') \prod_{j \in S \backslash \{ i \}} u_j(\vc') > n \cdot \left(1-\frac{n-1}{n}\right) =1 = \prod_{j=1}^n u_j(\vc),
\]
where the inequality holds because $u_i(\vec{c'}) \ge \rrs_i > n \cdot u_i(\vc) = n$, as player $i$ receives her round robin share under $\vec{c'}$ but did not even receive a $1/n$ fraction of it under $\vc$. Hence, outcome $\vc'$ achieves a higher product of utilities to players in $S$ than $\vc$ does, which is a contradiction.

For the upper bound, 
Consider a \seqdecision problem with $n$ \issues, where each \issue $t$ has two alternatives $a^t_1$ and $a^t_2$ with the following utilities to the players. Let $x=(\log n - \log \log n)/n$.

	\begin{center}
	\begin{tabular}{c|cc}
	&$a_1^1$&$a_2^1$\\
	\hline
	$u^1_1$ & 1 & $d$\\
	$u^1_2$ & 0 & x\\
	$\vdots$ & $\vdots$ & $\vdots$\\
	$u^1_n$ & 0 & x\\
	\end{tabular}
	\quad and \quad
	\begin{tabular}{c|cc}
	&$a_1^t$&$a_2^t$\\
	\hline
	$u^t_1$ & 1& $d$\\
	$u^t_2$ & 0 & 0\\
	$\vdots$ & $\vdots$ & $\vdots$\\
	$u^t_t$ & 0&1\\
	$\vdots$ & $\vdots$ & $\vdots$\\
	$u^t_n$ & 0 & 0
	\end{tabular}
	\quad for $t \in \{2,\ldots,n\}$.
	\end{center}
	We choose the value of $d$ later. Note that $\pps_1=1$ as long as $d<1$. Our goal is to make the MNW outcome choose alternative $a^t_2$ for every issue $t$. Let us denote this outcome by $\vc$. Then, the Nash welfare under $\vc$ is 
	\begin{equation}
		\label{eqn:mnw-pps-1}
		(n\cdot d) \cdot (1+x)^{n-1}.
	\end{equation}
	Let us find conditions on $d$ under which this is greater than the Nash welfare that other possible outcomes $\vc'$ could achieve. 
	
	Clearly, if $c'_1=a_1^1$ and $c'_t=a_1^t$ for any $t \neq 1$, then $u_t(\vc')=0$, and therefore $NW(\vc')=0$. Let us consider $\vc'$ under which $c'_1=a_1^1$ and $c'_t=a_2^t$ for all $t \neq 1$. The Nash welfare produced by this outcome is
	\begin{equation}
		\label{eqn:mnw-pps-2} 
		1+(n-1)d.
	\end{equation}
	
	Next, consider $\vc'$ with $c'_1=a_2^1$, $c'_{t^*}=a_1^{t^*}$ for some $t^* \neq 1$, and $c'_t = a_2^t$ for all remaining $t$. The Nash welfare under this outcome is
	\begin{equation}
		\label{eqn:mnw-pps-3}
		 (1+(n-1)d) \cdot x \cdot (1+x)^{n-2}.
	\end{equation}
	
	We do not need to consider outcomes $\vc'$ with $c'_1 = a_2^1$ and $c'_t = a_1^t$ for multiple $t \neq 1$. This is because if switching the outcome from $a_2^{t^*}$ to $a_1^{t^*}$ for even a single $t^* \neq 1$ decreases the Nash welfare, switching the outcomes on other $t \neq 1$ would only further decrease the Nash welfare, as it would reduce the utility to another player $t$ by the same factor $1/(1+x)$, while increasing the utility to player $1$ by an even smaller factor. 
	
	Let us identify the conditions on $d$ required for the quantity in Equation~\eqref{eqn:mnw-pps-1} to be greater than the quantities in Equations~\eqref{eqn:mnw-pps-2} and~\eqref{eqn:mnw-pps-3}. We need
	\begin{align}
		&\quad\ \  (n \cdot d) \cdot (1+x)^{n-1} > (1+(n-1)d) \cdot 1 \nonumber \\
		&\Leftrightarrow n \cdot d > \frac{1}{(1+x)^{n-1}-1+1/n}, \label{eqn:nash-pps-comparison-1}
	\end{align}
	and 
	\begin{align}
		&(n \cdot d) (1+x)^{n-1} > (1+(n-1)d) \cdot x \cdot (1+x)^{n-2} \nonumber\\
		&\Leftrightarrow n \cdot d > \frac{n \cdot x}{n+x}. \label{eqn:nash-pps-comparison-2}
	\end{align}
	
	It is easy to check that for $x = (\log n - \log \log n)/n$, the quantities on the RHS of both Equations~\eqref{eqn:nash-pps-comparison-1} and~\eqref{eqn:nash-pps-comparison-2} are $O(\log n/n)$. Hence, we can set $d$ to be sufficiently low for $n \cdot d$ to be $\Theta(\log n/n)$. However, note that $n \cdot d$ is precisely the approximation to \pps achieved for player $1$ under $\vc$, as required. 
\end{proof}	

For private goods, we can show that the MNW solution provides much better approximations to both \rrs and \pps, as a result of its strong fairness guarantee of EF1. 

\begin{lemma} 
\label{lem:ef1-pps-rrs}
For \pgd, envy-freeness up to one good (EF1) implies \pps and $n/(2n-1)$-\rrs, but does not imply $n/(2n-2)$-\rrs.
\end{lemma}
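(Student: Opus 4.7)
The lemma has three parts: EF1 implies \pps, EF1 implies $n/(2n-1)$-\rrs, and there exists an EF1 allocation violating $n/(2n-2)$-\rrs. For the first part (\pps), I rely on a general subset fact: for any $S \subseteq M$ with $|S| \ge p$, sorting $M$ and $S$ in increasing order of $u_i$ gives $s_k \ge m_k$ for each $k \le |S|$ (since the $k$ smallest elements of $S$ are elements of $M$ with value at most $s_k$), so $u_i(S) \ge \sum_{k=1}^p s_k \ge \sum_{k=1}^p m_k = \pps_i$. If $|A_i| \ge p$, this immediately gives $u_i(A_i) \ge \pps_i$. Otherwise $\sum_{j \ne i} |A_j| = m - |A_i| \ge m - p + 1 > p(n-1)$, so pigeonhole produces some $j$ with $|A_j| \ge p+1$; letting $g_j^* \in \argmax_{g \in A_j} u_i(g)$, the subset fact applied to $A_j \setminus \{g_j^*\}$ yields $u_i(A_j \setminus \{g_j^*\}) \ge \pps_i$, and EF1 transfers the bound: $u_i(A_i) \ge u_i(A_j) - u_i(g_j^*) \ge \pps_i$.

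For the $n/(2n-1)$-\rrs bound I use two independent lower bounds on $u_i(A_i)$ and take the stronger. The first, $u_i(A_i) \ge u^{(n)}_{max}(i)$, follows because either $A_i$ contains one of $i$'s top $n$ goods (and we are done) or else those $n$ goods lie among the other $n-1$ bundles; pigeonhole yields some $A_j$ with two top-$n$ goods, and removing any single good from $A_j$ leaves at least one top-$n$ good, so EF1 gives the bound. The second comes from Lemma~\ref{lem:ef1}, which I rewrite as $u_i(A_i) \ge (1/n) \sum_{t=n}^m u^{(t)}_{max}(i)$. A direct sortedness accounting --- the $n-1$ positions in each block $\{kn+1, \ldots, (k+1)n-1\}$ have values $\ge u^{((k+1)n)}_{max}(i)$, and the tail beyond position $pn$ contributes $\ge 0$ --- strengthens this to $u_i(A_i) \ge \rrs_i - \frac{n-1}{n}\, u^{(n)}_{max}(i)$. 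Comparing the two bounds: if $u^{(n)}_{max}(i) \ge (n/(2n-1))\rrs_i$ the first already suffices, and otherwise $u^{(n)}_{max}(i) < n\rrs_i/(2n-1)$, so substituting into the second gives $u_i(A_i) > \rrs_i\bigl(1 - (n-1)/(2n-1)\bigr) = (n/(2n-1))\rrs_i$.

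For the tightness I exhibit an instance with $n=2$, four goods, utilities $u_1 = (2,2,1,1)$ and $u_2 = (1,1,2,2)$, and allocation $A_1 = \{g_3,g_4\}$, $A_2 = \{g_1,g_2\}$. EF1 holds for both players because removing any single good from the other's bundle drops its perceived value from $4$ to $2$, matching the player's own utility. Yet $u_1(A_1) = 2$ while $\rrs_1 = u^{(2)}_{max}(1) + u^{(4)}_{max}(1) = 2 + 1 = 3$, giving ratio $2/3 < 1 = n/(2n-2)$. The main obstacle I anticipate is the sortedness step in the second part: establishing the clean identity $\sum_{t=n}^m u^{(t)}_{max}(i) \ge n\rrs_i - (n-1) u^{(n)}_{max}(i)$ and then checking that balancing the two lower bounds against each other produces exactly the ratio $n/(2n-1)$ (with the crossover occurring at $u^{(n)}_{max}(i) = n\rrs_i/(2n-1)$).
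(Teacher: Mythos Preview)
Your arguments for \pps and for the $n/(2n-1)$-\rrs lower bound are correct and track the paper's proof closely. For \pps the paper uses the same pigeonhole step (phrased as a contradiction rather than via your ``subset fact,'' but the content is identical). For the \rrs bound, you and the paper derive the same two inequalities, namely $u_i(A_i)\ge u_{max}^{(n)}(i)$ and $u_i(A_i)\ge \rrs_i-\tfrac{n-1}{n}u_{max}^{(n)}(i)$; you take the maximum and case-split on which dominates, while the paper adds $(n-1)$ copies of the first to $n$ copies of the second to get $(2n-1)\,u_i(A_i)\ge n\,\rrs_i$ directly. These are equivalent ways of combining the same two bounds.

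The gap is in the tightness construction. Your example is only for $n=2$, where $n/(2n-2)=1$ and you have (correctly) shown that EF1 does not imply \rrs itself. The lemma, however, is stated with $n$ as the number of players, and the paper supplies a construction for every $n$: an instance with $n$ players and $n^2$ goods in which player~$1$ values $g_1,\ldots,g_n$ at~$1$ and the remaining goods at~$1/(n-1)$; the allocation $A_1=\{g_{n+1},\ldots,g_{2n}\}$, $A_2=\{g_1,g_2\}$, and $A_i=\{g_i\}\cup\{g_{(i-1)n+1},\ldots,g_{in}\}$ for $i>2$ is EF1 and gives player~$1$ a $\tfrac{n}{2n-2}$ fraction of $\rrs_1$. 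This general-$n$ example is what the paper uses immediately afterward to conclude that the $1/2$ bound is \emph{asymptotically} tight; your $n=2$ example (while actually hitting the sharper ratio $2/3=n/(2n-1)$ for that case) does not establish anything about large $n$.
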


\begin{proof}
	Let $\vA$ be an allocation of private goods that satisfies EF1. First, we show that $\vA$ must also satisfy \pps. Suppose for contradiction that it violates \pps. Then, there exists a player $i$ such that $u_i(A_i)<\pps_i$, which in turn implies that $|A_i|<p$. Because the average number of goods per player is $\frac{m}{n} \ge p$, there must exist a player $j$ such that $|A_j|>p$. Hence, for any good $g \in A_j$, player $j$ has at least $p$ goods even after removing $g$ from $A_j$, which implies $u_i(A_j \backslash \{ g \}) \ge \pps_i > u_i(A_i)$. However, this contradicts the fact that $\vA$ is EF1. 

	 We now show that $\vA$ also satisfies $1/2$-\rrs. By Lemma~\ref{lem:ef1}, we have 
	 \begin{equation}  \label{eq:ef1-implies-1/2-rrs-1}
		  u_i(A_i) \ge \frac{1}{n} \sum_{t=n}^m u_{max}^{(t)}(i) \ge \frac{1}{n} u^{(n)}_{max}(i) + \sum_{k=2}^p u_{max}^{(k \cdot n)}(i).
	\end{equation}
Further, since $\vA$ satisfies EF1, it must be the case that 
\begin{equation} \label{eq:ef1-implies-1/2-rrs-2}
	u_i(A_i) \ge u_{max}^{(n)}(i).
\end{equation}
To see this, suppose for contradiction that $u_i(A_i) < u_{max}^{(n)}(i)$, which implies that player $i$ is not allocated any of her $n$ most valued goods. Therefore, by the pigeonhole principle, there must exist a player $j \in N\setminus\set{i}$ that is allocated at least two of these goods. Hence, for any $g \in A_j$, we have  
$u_i(A_j \setminus \{ g \}) \ge u_{max}^{(n)}(i) > u_i(A_i)$,
which violates EF1. Finally, adding $n$ times Equation~\eqref{eq:ef1-implies-1/2-rrs-1} with $n-1$ times Equation~\eqref{eq:ef1-implies-1/2-rrs-2}, we obtain
\[ 
(2n-1) \cdot u_i(A_i) \ge n\cdot u_{max}^{(n)}(i) + n\cdot \sum_{k=2}^p u_{max}^{(k \cdot n)}(i) = n\cdot \rrs_i,
\]
which implies the desired $n/(2n-1)$-\rrs guarantee.	 

For the upper bound, consider an instance with $n$ players and $n^2$ goods, and define player $1$'s utility function to be 
	 \[ u_1(g_j) = 
	 \begin{cases} 
		 1 & 1 \le j \le n, \\
		 \frac{1}{n-1} & n+1 \le j \le n^2.
		\end{cases} \]
		Note that $\rrs_1 = 1 + (n-1)\frac{1}{n-1}=2$.
		Consider the allocation $\vA$ with $A_1 = \{ g_{n+1}, \ldots, g_{2n} \}$, $A_2= \{g_1, g_2 \}$, and $A_i=\{ g_i, g_{(i-1)n+1}, \ldots, g_{i\cdot n} \}$ for all players $i >2$. Let the utilities of players $2$ through $n$ be positive for the goods they receive and zero for the remaining goods. Hence, they clearly do not envy any players. For player $1$, we have $u_1(A_1)=\frac{n}{n-1}$, $u_1(A_2 \setminus \{ g_2 \})=1$, and $u_1(A_i \setminus \{ g_i \})=\frac{n}{n-1}$ for all $i > 2$. That is, player $1$ does not envy any other player up to one good. Hence, $\vA$ satisfies EF1, and player 1 obtains a $\frac{n}{2n-2}$ fraction of her \rrs share, as required.
\end{proof}	
	
As a corollary of Lemma~\ref{lem:ef1-pps-rrs}, EF1 implies $1/2$-\rrs, and this approximation is asymptotically tight. Further, because the MNW solution satisfies EF1, Lemma~\ref{lem:ef1-pps-rrs} immediately provides guarantees (lower bounds) for the MNW solution. However, the upper bound in the proof of Lemma~\ref{lem:ef1-pps-rrs} does not work for the MNW solution. Next, we establish a much weaker lower bound, leaving open the possibility that the MNW solution may achieve a constant approximation better than $1/2$ to \rrs.

\begin{theorem}
For \pgd, the MNW solution satisfies \pps and $n/(2n-1)$-\rrs. For every $\eps > 0$, the MNW solution does not satisfy $(2/3+\epsilon)$-\rrs.
\label{thm:mnw-rrs-pps-pgd}
\end{theorem}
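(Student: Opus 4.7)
The first two claims are direct corollaries of results established earlier. The MNW solution for private goods division is known to satisfy EF1~\citep{CKMP+16}, and Lemma~\ref{lem:ef1-pps-rrs} shows that EF1 implies both \pps and $n/(2n-1)$-\rrs. Hence MNW satisfies both guarantees, completing the lower bound.

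For the upper bound, I would construct, for every $\epsilon > 0$, an explicit two-player, four-good instance where MNW gives Player~$1$ utility less than $(2/3 + \epsilon) \cdot \rrs_1$. The construction is motivated by the tightness analysis of the $n/(2n-1)$-\rrs argument at $n = 2$: Player~$1$'s sorted values should satisfy $u^{(2)}_{max}(1) = u^{(3)}_{max}(1) + u^{(4)}_{max}(1)$ and $u^{(3)}_{max}(1) = u^{(4)}_{max}(1)$, leading to a value pattern of the form $(\,\cdot\,, 2, 1, 1)$. Fix a small parameter $\nu \in (0, 1/3)$ and let Player~$1$'s values on $(g_1, g_2, g_3, g_4)$ be $(2, 2 - 3\nu, 1, 1)$ and Player~$2$'s values be $(1, 1 - \nu, 1/3 - \nu, 1/3 - \nu)$. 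The perturbations are introduced precisely to strictly break a family of ties that arise in the degenerate instance $\nu = 0$.

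The central step is to verify that the target allocation $A^* = (\{g_3, g_4\}, \{g_1, g_2\})$, with Nash welfare $2\cdot(1 + (1 - \nu)) = 4 - 2\nu$, is the unique Nash welfare maximizer. This reduces to enumerating all $2^4$ partitions of the four goods and checking each. The most delicate competing allocations are: $\{g_1, g_3, g_4\}$ with NW $4(1 - \nu)$, suppressed by the perturbation in Player~$2$'s value for $g_2$; $\{g_2, g_3, g_4\}$ with NW $(4 - 3\nu)\cdot 1$, suppressed by the perturbation in Player~$1$'s value for $g_2$; and the pairings $\{g_i, g_j\}$ with $i \in \{1,2\}$, $j \in \{3,4\}$, whose NWs are at most $4 - 6\nu$, suppressed by the perturbations in Player~$2$'s values for $g_3, g_4$. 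The three perturbations must be coordinated (the $3\nu$ on Player~$1$'s $g_2$ must dominate the $\nu$'s on Player~$2$'s goods) so that every competitor is simultaneously strictly below $4 - 2\nu$; the remaining partitions have NW at most $\tfrac{10}{3} - O(\nu)$ and pose no difficulty.

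Given this MNW allocation, Player~$1$'s utility equals $2$, while her round robin share, computed from the sorted sequence $(2, 2 - 3\nu, 1, 1)$, is $\rrs_1 = u^{(2)}_{max}(1) + u^{(4)}_{max}(1) = (2 - 3\nu) + 1 = 3 - 3\nu$. The resulting ratio $2/(3 - 3\nu)$ approaches $2/3$ from above as $\nu \to 0^+$, and a direct manipulation shows that choosing $\nu < 3\epsilon/(2 + 3\epsilon)$ makes the ratio strictly less than $2/3 + \epsilon$, establishing that MNW does not satisfy $(2/3 + \epsilon)$-\rrs. The main obstacle is the coordination step: picking the right signs and magnitudes for the perturbations so that all three families of competing allocations are suppressed at once, while keeping all values nonnegative and preserving the sorted order of Player~$1$'s valuations used in computing $\rrs_1$.
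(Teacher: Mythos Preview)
Your proof is correct and takes essentially the same approach as the paper: the lower bounds are derived identically from EF1 via Lemma~\ref{lem:ef1-pps-rrs}, and the upper bound uses a two-player, four-good instance with a small perturbation parameter to push the ratio toward $2/3$. The paper's instance is simpler---Player~$1$ has values $(1-\delta,\,1-\delta,\,1/2,\,1/2)$ and Player~$2$ has values $(1,1,0,0)$, yielding $u_1(A_1)/\rrs_1 = 2/(3-2\delta)$---so the ``coordination step'' you flag as the main obstacle never arises, but the idea is the same.
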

\begin{proof}
The lower bounds follow directly from Lemma~\ref{lem:ef1-pps-rrs} and the fact that the MNW solution satisfies EF1. For the upper bound, consider an instance with two players and four goods. Player $1$ has utilities $(1-\delta, 1-\delta, 1/2, 1/2)$ and player $2$ has utilities $(1,1,0,0)$ for goods $(g_1,g_2,g_3,g_4)$, respectively. Note that $\rrs_1=3/2-\delta$. The MNW allocation $\vA$ is given by $A_1=\{ g_3,g_4 \}$ and $A_2=\{ g_1, g_2 \}$. Thus, 
$\frac{u_1(A_1)}{\rrs_1}= \frac{2}{3-2\delta}$.
The upper bound follows by setting $\delta$ sufficiently small.
\end{proof}

\section{Computational Complexity}
\label{sec:comp}

In Section~\ref{sec:mnw}, we showed that without requiring Pareto optimality, we can achieve both \rrs (thus \pps) and \propone in polynomial time using the round robin method (Theorem~\ref{thm:round-robin}). In contrast, when we require PO, the leximin mechanism (with an appropriate normalization of utilities) provides \rrs (thus \pps) and PO, while the MNW solution provides \propone and PO. However, both these solutions are \nphard to compute~\cite{RE10,BS06}. This raises a natural question whether we can efficiently find outcomes satisfying our fairness guarantees along with PO. For \pps, the answer is negative.

\begin{theorem} \label{thm:pps-po-np-hard}
It is \nphard to find an outcome satisfying \pps and PO.
\end{theorem}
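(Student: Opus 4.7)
The plan is to reduce from a classical NP-complete problem such as PARTITION (or a related number problem like SUBSET-SUM or EXACT-COVER-BY-3-SETS). Given an instance with positive integers $a_1, \ldots, a_k$ summing to $2B$, the goal is to construct, in polynomial time, a public decision making instance whose outcomes satisfying both \pps and PO are in correspondence with balanced partitions of the source instance.

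The key design idea is to exploit the fact that, while \pps is easy on its own (via round robin, by Theorem~\ref{thm:round-robin}) and PO is trivially achievable (any utilitarian welfare maximizer is PO), their combination can pin down a combinatorially hard structure. Concretely, the construction would include ``item'' issues, one per $a_i$, whose alternatives essentially award value $a_i$ to one of the two players. Since the paper's later positive result shows that \pps + PO is polynomial-time solvable for private goods, the reduction must genuinely exploit the public-decision aspect. To do so, I would add auxiliary issues whose alternatives give positive utility to multiple players simultaneously, creating Pareto-dominance relations that make the PO constraint binding and rule out the unrestricted assignment flexibility that makes the private goods version tractable.

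I would then calibrate the \pps thresholds so that each player needs utility at least $B$. Since $\pps_i$ sums the $\lfloor m/n \rfloor$ smallest per-issue maximum utilities for player $i$, hitting an exact target $B$ generally requires adding padding issues whose role is to place the ``cutoff'' index where desired, together with asymmetric utility profiles so that the two players' \pps values can be tuned independently. With these pieces in place, a PO outcome must attain essentially the maximum feasible total welfare, and the \pps constraints then force that welfare to split exactly $B$-to-$B$, which is possible if and only if the PARTITION instance is a YES instance; conversely, any balanced partition yields such an outcome, giving the reduction in both directions.

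The main obstacle is engineering the auxiliary and padding structure so that PO is truly binding while the \pps thresholds are tight enough to encode PARTITION exactly. A naive construction tends to leave slack in one direction or the other: either PO permits many ``wasteful'' outcomes that satisfy \pps without forcing the partition, or \pps is loose enough that unbalanced splits still satisfy it. Carefully aligning the Pareto frontier of the constructed instance with the discrete set of balanced partitions---so that efficiency demands precisely the local balance that encodes a valid partition---is the crux of the reduction, and ensures that any polynomial-time algorithm finding a \pps + PO outcome could be used to decide PARTITION.
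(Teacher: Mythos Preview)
There is a structural gap in your plan. You aim for an instance whose \pps+PO outcomes ``are in correspondence with balanced partitions,'' and in which ``the \pps constraints then force that welfare to split exactly $B$-to-$B$, which is possible if and only if the PARTITION instance is a YES instance.'' But a \pps+PO outcome \emph{always} exists (apply leximin, or start from the round robin outcome and take Pareto improvements), so on a NO instance your construction would simultaneously have no \pps+PO outcome (no $B$--$B$ split is available) and have one---a contradiction. This is precisely why the paper emphasizes that the hardness is for the \emph{search} problem: the decision problem of existence is trivial. (As a side note, ``a PO outcome must attain essentially the maximum feasible total welfare'' is not true in general; Pareto optimality is not utilitarian optimality, and nothing in your sketch forces the Pareto frontier to collapse to the welfare-maximizing point.)

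A correct reduction must exhibit a polynomial-time checkable property of the \emph{returned outcome} that distinguishes YES from NO instances. The paper does this by a rather different route: it first introduces and proves \npcomplete a new problem, Exact Triple-Cover by 3-Sets (X33C), and then builds an instance with $r$ players and $r$ issues in which each issue offers ``private'' alternatives worth $1-\epsilon$ to a single player together with one genuinely public alternative per 3-set worth $1/3$ to each of three players. This gives $\pps_i = 1-\epsilon$ for every $i$, and the paper shows that a triple-cover exists if and only if every \pps+PO outcome gives \emph{every} player utility exactly $1$; when no triple-cover exists, any \pps outcome must use a private alternative somewhere, driving total welfare below $r$ and hence some player's utility below $1$. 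One then runs the putative polynomial-time algorithm and simply checks whether all utilities equal $1$. Your proposal never actually constructs the auxiliary public issues or the padding needed to set the \pps thresholds, and even if it did, the existence contradiction above would still have to be resolved by replacing the ``correspondence with partitions'' target with an output property of this kind.
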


Note that it is the \emph{search problem} of finding an outcome (any outcome) satisfying \pps and PO for which we prove computational hardness; the \emph{decision problem} of testing the existence of such an outcome is trivial as we know it always exists. Before we prove this result, we need to introduce a new (to our knowledge) decision problem and show that it is \npcomplete.

\paragraph{Exact Triple-Cover by 3-sets (X33C):} An instance $(Y,\mathcal{T})$ of X33C is given by a set $Y$ of $r$ vertices and a set $\mathcal{T} = \{ T_1, T_2, \ldots, T_m \}$, where each $T_i$ is a set of three vertices. The decision problem is to determine whether it is possible to choose $r$ sets, with repetition allowed, such that every vertex $v$ is contained in exactly three of the chosen sets (an exact triple-cover).

Let us contrast this with the definition of the popular NP-complete problem, Exact Cover by 3-sets (X3C): An instance $(X, \mathcal{S})$ of X3C is given by a set $X$ of $3q$ vertices and a set $\mathcal{S} = \{ S_1, \ldots, S_n \}$, where each $S_i$ is a set of three vertices. The decision problem is to determine if there exists a subset of $\mathcal{S}$ of size $q$ that covers every vertex $x \in X$ exactly once (an exact cover).

\begin{lemma} \label{LEM:X33X-NP-COMPLETE}
	X33C is NP-complete.
\end{lemma}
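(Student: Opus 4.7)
The plan is to prove NP-completeness in the usual two steps. Membership in NP is immediate: a candidate triple-cover is a multiset of $r$ sets from $\mathcal{T}$ (a polynomial-size certificate), and verifying that each vertex is covered exactly three times is linear in the certificate's size.

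For hardness I plan to reduce from X3C. Given an X3C instance $(X, \mathcal{S})$ with $|X| = 3q$, the first thing to note is that the naive reduction $Y = X$, $\mathcal{T} = \mathcal{S}$, $r = 3q$ is not sound: one can have a triple-cover without an exact cover. For example, if $X = \{1,\dots,12\}$ and $\mathcal{S}$ consists of every $3$-subset of $\{1,2,3,4\}$, of $\{5,6,7,8\}$, and of $\{9,10,11,12\}$, then no $4$-element block can be exact-covered by $3$-subsets, yet choosing each of the twelve triples once covers every vertex exactly three times (each vertex lies in $\binom{3}{2}=3$ of the four $3$-subsets of its block). So I will instead attach a small ``forcing'' gadget of auxiliary vertices and triples to each $S_i \in \mathcal{S}$, designed so that in every exact triple-cover the multiplicity of the triple $T_i$ corresponding to $S_i$ is pinned to $\{0,3\}$, while for each of these two values the gadget admits a consistent filler on its auxiliary vertices.

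Given such a gadget, both directions of the reduction are routine. In the forward direction, an exact cover $\{S_{i_1}, \ldots, S_{i_q}\}$ lifts to an exact triple-cover by choosing each $T_{i_j}$ three times together with the ``$T_i = 3$'' filler of its gadget, and using the ``$T_i = 0$'' filler on the remaining gadgets; the total number of chosen triples $r$ is determined by the gadget sizes and the total vertex count of $Y$. In the reverse direction, the gadget property forces the multiplicity vector $(x_i)$ on the original triples to be $\{0,3\}$-valued, so $y_i := x_i/3$ is a $\{0,1\}$-indicator of a subfamily of $\mathcal{S}$; summing the vertex-coverage equations over $v \in X$ then gives $\sum_{i:v\in S_i} y_i = 1$, i.e., this subfamily is an exact cover.

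The main obstacle is the gadget design. The difficulty is that the local ``covered exactly three times'' constraints admit many non-$\{0,3\}$ integer solutions, so the gadget must propagate a divisibility-by-three condition back onto $T_i$'s multiplicity using only local coverage equations on auxiliary vertices. A natural starting point is a small cycle of auxiliary vertices around $T_i$, exploiting the fact that three equations of the form $x_a + x_b = x_b + x_c = x_c + x_a = 3$ force $x_a = x_b = x_c = 3/2$ (infeasible in integers) unless broken by a variable coupled to $T_i$, so that the only admissible integer resolutions have $T_i$'s multiplicity in $\{0,3\}$. If such a direct construction turns out to be fragile, a clean fallback is to reduce from 3-Dimensional Matching instead of X3C, since the disjointness structure of matching triples makes ``fractional-looking'' triple-covers easier to rule out.
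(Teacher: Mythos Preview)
Your proposal correctly handles NP membership and correctly diagnoses why the naive reduction fails, but the hardness argument has a genuine gap: the gadget is never constructed. You state its intended effect --- pin each $T_i$'s multiplicity to $\{0,3\}$ --- and gesture at a cycle of auxiliary vertices, but you yourself label this ``the main obstacle'' and offer 3DM only as an unexamined fallback. There is moreover a structural difficulty you do not confront: for the gadget attached to $S_i$ to constrain $x_i$, some gadget triple must share a vertex with $T_i$; but $T_i$'s vertices are original vertices that may lie in many other $S_j$, so per-set gadgets interact through those shared vertices, and your clean conclusion $\sum_{i:v\in S_i} y_i = 1$ for each $v\in X$ (which tacitly assumes only the $T_i$ cover original vertices) no longer follows. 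As written this is a plan, not a proof.

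The paper resolves this by attaching gadgets to \emph{vertices} rather than to sets. It partitions $X$ arbitrarily into $q$ blocks of three original vertices, and for each block builds a gadget with $8$ fresh vertices and $10$ triples, some of which contain the block's original vertices. Because the blocks are disjoint, distinct gadgets never share a vertex and cannot interact. A case analysis then shows that in every triple-cover each of the $10$ gadget triples must be chosen exactly once, and this contributes exactly two units of coverage to each original vertex in its block; the remaining one unit per original vertex must come from $\mathcal{S}$, which is exactly the X3C condition. The idea you are missing is to control the \emph{residual coverage on vertices} directly, rather than trying to pin individual set multiplicities.
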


\begin{proof}
	Clearly, X33C lies in NP because a triple-cover can be checked in polynomial time. To show hardness, we reduce from X3C. Given an instance $(X, \mathcal{S})$ of X3C, divide $X$ into $q$ sets of 3 vertices arbitrarily, indexed by $k$. For every one of these $q$ sets of three vertices $s^k= \{ s^k_1, s^k_2, s^k_3 \}$, create 8 new vertices,  $\{ s^k_{i,j} : i \in [2], j \in [4] \}$, and 10 new sets $\{ T^k_{i,j} : i \in [2], j \in [5] \}$. The sets $T^k_{i,j}$ are defined as follows: $T^k_{i,1} = \{s^k_{i,1},s^k_{i,2}, s^k_{i,3} \}$, $T^k_{i,2} = \{ s^k_{i,2},s^k_{i,3}, s^k_{i,4} \}$,  $T^k_{i,3} = \{s^k_{i,1},s^k_{i,2}, s^k_{i,4} \}$, $T^k_{i,4} = \{s^k_{i,1},s^k_{i,3}, s^k_1 \}$, and $T^k_{i,5} = \{s^k_{i,4},s^k_2, s^k_3 \}$.
	
	The X33C instance is given by $(Y = X \cup \{ s^k_{i,j} : i \in [2], j \in [4], k \in [q] \}, \mathcal{T}= \mathcal{S} \cup \{ T^k_{i,j} : i \in [2], j \in [5], k \in [q] )$. Note that $|Y|=11q$. We show that $(Y,\mathcal{T})$ has an exact triple-cover if and only if there exists an exact cover for $(X, \mathcal{S})$.
	
	First, suppose that there exists an exact cover for $(X, \mathcal{S})$. Then there exists an exact triple-cover for $(Y, \mathcal{T})$ by selecting sets $T^k_{i,j}$ for every $k \in [q]$, $i \in [2]$, and $j \in [5]$. It is easy to check that these $10q$ sets cover each $s^k_{i,j}$ exactly three times, as well as covering $s^k_k$ exactly twice, for all $k \in [q]$ and $k \in [3]$. Hence, all we need to do is add the solution to the original X3C instance.
	
	Now, suppose that there exists an exact triple-cover by 3-sets for the X33C instance. This implies that, for any $k$ and $i$, exactly three sets from $\{ T^k_{i,1}, T^k_{i,2}, T^k_{i,3} \}$ must be chosen (recall that we can choose the same set more than once), because these are the only sets that contain $s^k_{i,2}$, which must be covered exactly three times. 
	We now consider how we can choose these three sets. Suppose that $T^k_{i,2}$ is chosen more than once. Then only (at most) one of $T^k_{i,1}$ and $T^k_{i,3}$ is chosen, so we still need to cover $s^k_{i,1}$ (at least) twice. The only way to do this is by choosing $T^k_{i,4}$ twice. But then $s^k_{i,3}$, which is contained in both $T^k_{i,2}$ and $T^k_{i,4}$, is covered more than three times, a violation of the conditions of an exact triple-cover. By similar reasoning, we can show that $T^k_{i,3}$ cannot be chosen more than once. 
	Now suppose that $T^k_{i,1}$ is chosen twice. Then it remains to choose exactly one of $T^k_{i,2}$ and $T^k_{i,3}$; suppose WLOG that we choose $T^k_{i,2}$. Then we still need to cover $s^k_{i,1}$ an additional time. The only way to do this is to choose $T^k_{i,4}$, which also covers $s^k_{i,3}$, meaning that $s^k_{i,3}$ is covered four times, violating the condition of the exact triple-cover.
	Finally, suppose that $T^k_{i,1}$ is chosen three times. Then we still need to cover $s^k_{i,4}$ three times without covering any of $s^k_{i,1}$, $s^k_{i,2}$, or $s^k_{i,3}$ again. We therefore need to choose $T^k_{i,5}$ three times.
	Otherwise, we can choose each of $T^k_{i,1}$, $T^k_{i,2}$, $T^k_{i,3}$, $T^k_{i,4}$, and $T^k_{i,5}$ once each, which covers each of $s^k_{i,1}$, $s^k_{i,2}$, and $s^k_{i,3}$ once each. All other options have been ruled out. In particular, it is necessary to choose $T^k_{i,5}$ at least once.
	
	So there are two options. Regardless of which option we choose, we still have to cover each of $s^k_{i',1}$, $s^k_{i',2}$, and $s^k_{i',3}$ three times each, for $i' \not= i$. Since the options for $i'$ are symmetric to those for $i$, it is again necessary to choose $T^k_{i',5}$ at least once. If we choose $T^k_{i,1}$ three times and $T^k_{i,5}$ three times, as well as $T^k_{i',5}$ at least once (as we must), then $s^k_2$ and $s^k_3$ are covered at least four times, a violation of the exact triple-cover. Therefore the only possibility is to choose each of $T^k_{i,1}$, $T^k_{i,2}$, $T^k_{i,3}$, $T^k_{i,4}$, and $T^k_{i,5}$ once. Similarly, we must choose each of $T^k_{i',1}$, $T^k_{i',2}$, $T^k_{i',3}$, $T^k_{i',4}$, and $T^k_{i',5}$ once each also. And, since $k$ was arbitrary, this holds for all $k \in [q]$. 
	
	So, for all  $k \in [q]$ and all $i \in [2]$, each of $T^k_{i,1}$, $T^k_{i,2}$, $T^k_{i,3}$, $T^k_{i,4}$, and $T^k_{i,5}$ is chosen exactly once, a total of $10q$ sets chosen. We therefore have $q$ more sets to choose, which necessarily cover each of $v \in \mathcal{S}$ exactly once (note that each $v \in \mathcal{S}$ corresponds to an $s^k_j$ for some $k \in [q]$ and $j \in [3]$, and these are covered exactly once by either $T^k_{1,4}$ or $T^k_{1,5}$, and exactly once again by either $T^k_{2,4}$ or $T^k_{2,5}$). The only way to choose $q$ sets that cover each $v \in \mathcal{S}$ exactly once is by choosing an exact cover for the instance $(X, \mathcal{S})$.
	\end{proof}

Using this lemma, we can now show that finding an outcome satisfying \pps and PO is \nphard through a reduction from X33C. 

\begin{proof}[Proof of Theorem~\ref{thm:pps-po-np-hard}]
	Let $(Y,\mathcal{T})$ be an instance of X33C, with $|Y|=r$. Let $\epsilon \in (0,1/(3r))$. We define a \seqdecision problem as follows. There are $r$ players, one corresponding to each vertex $v \in Y$, and $r$ issues. For each issue, there are $|Y| + |\mathcal{T}|$ alternatives. For each issue $t$ and each player $i$, there is an alternative $a_{t,i}$ which is valued at $1-\epsilon$ by player $i$, and 0 by all other players. The remaining $|\mathcal{T}|$ alternatives correspond to the 3-sets from the X33C instance. For a set $T_j \in \mathcal{T}$, the corresponding alternative is valued at $\frac{1}{3}$ by players $i \in T_j$, and valued at 0 by all other players. Note that $\pps_i = 1-\epsilon$ for each player $i$, because there are exactly as many issues as players, and each player values its most preferred alternative for each issue at $1-\epsilon$.
	
	We now show that there exists an exact triple-cover by 3-sets if and only if all outcomes to the \seqdecision problem that satisfy \pps and PO have $u_i(\vc)=1$ for all $i$.
	First, suppose that there exists an exact triple-cover by 3-sets. We need to show that all outcomes satisfying \pps and PO have $u_i(\vc)=1$ for all $i$. So suppose otherwise -- that there exists an outcome satisfying \pps and PO with $u_i(\vc) \not= 1$ for some player $i$. In particular, some player must have $u_i(\vc) > 1$, otherwise $\vc$ is not PO (because it is possible to choose an outcome corresponding exactly to an exact triple-cover, which gives each player utility 1). But players only derive utility in discrete amounts of $1-\epsilon$ or $\frac{1}{3}$, which means that any player with $u_i(\vc)>1$ has $u_i(\vc) \ge \frac{4}{3}-\epsilon$. 
	\[ \sum_{i=1}^r u_i(\vc) \ge \frac{4}{3}-\epsilon + \sum_{i=1}^{r-1} (1-\epsilon) = \frac{1}{3} + r - r\epsilon >r, \]
	where the last inequality holds because $\epsilon < 1/(3r)$. However, this is a contradiction because each alternative in each of the $r$ issues contributes at most $1$ to the social welfare. Therefore, every outcome satisfying \pps and PO has $u_i(\vc)=1$ for all $i$.
	
	Next, suppose that there does not exist an exact triple-cover by 3-sets. So if we choose an alternative corresponding to a 3-set for every issue, it is not possible for every player to derive utility 1. Therefore, some player must derive utility $\frac{2}{3}$ (or lower), which violates \pps. Thus, every outcome that satisfies \pps must include at least one issue where the chosen alternative is one that corresponds to a player, not to a 3-set. Such an alternative only contributes $1-\epsilon$ to social welfare. Therefore, the social welfare is strictly less than $r$, which means that some player gets utility strictly less than 1. Therefore, there is no outcome satisfying \pps (either with or without PO) such that $u_i(\vc)=1$ for all $i$. Since the set of outcomes satisfying \pps is always non-empty, it is therefore not the case that all outcomes satisfying \pps and PO have $u_i(\vc)=1$ for all $i$.
\end{proof}

Because every outcome satisfying \rrs also satisfies \pps, we have the following corollary.
\begin{corollary} \label{cor:rrs-po-np-hard}
It is \nphard to find an outcome satisfying \rrs and PO. 
\end{corollary}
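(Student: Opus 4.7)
The plan is a one-line reduction from the search problem shown to be hard in Theorem~\ref{thm:pps-po-np-hard}. The key observation, stated in the paragraph on connections among fairness properties, is that $\rrs \implies \pps$: every outcome satisfying \rrs automatically satisfies \pps. So any algorithm that produces an \rrs+PO outcome also produces a \pps+PO outcome ``for free.''

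Concretely, I would argue by contradiction. Suppose there were a polynomial-time algorithm $\mathcal{A}$ that, on every instance of public decision making, returns an outcome satisfying \rrs and PO. First I would note that this search problem is well-posed: by Theorem~\ref{thm:leximin} the (appropriately normalized) leximin mechanism always yields an outcome satisfying \rrs and PO, so such an outcome exists for every instance. Then, given any instance of the \pps+PO search problem from Theorem~\ref{thm:pps-po-np-hard}, I would simply run $\mathcal{A}$ on that instance, obtain an outcome $\vc$ satisfying \rrs and PO, and output $\vc$ as a solution to the \pps+PO search problem. Since $\rrs \implies \pps$, the outcome $\vc$ satisfies \pps and PO, so this would solve the \pps+PO search problem in polynomial time, contradicting Theorem~\ref{thm:pps-po-np-hard}.

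There is essentially no obstacle in this proof — all the technical content lives in the reduction from X33C to the \pps+PO search problem (Theorem~\ref{thm:pps-po-np-hard}) and the \npcomplete-ness of X33C (Lemma~\ref{LEM:X33X-NP-COMPLETE}). The only subtlety worth mentioning is that the hardness is for a search problem rather than a decision problem, which is why I flag that \rrs+PO outcomes always exist (via Theorem~\ref{thm:leximin}) before invoking the contradiction; this ensures we are comparing two honest search problems, and that an \rrs+PO solver can legitimately be used as a \pps+PO solver on the very instances constructed in the proof of Theorem~\ref{thm:pps-po-np-hard}.
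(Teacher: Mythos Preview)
Your proposal is correct and takes essentially the same approach as the paper: the paper derives the corollary in one line from the implication $\rrs \Rightarrow \pps$ together with Theorem~\ref{thm:pps-po-np-hard}. Your additional remarks about existence via Theorem~\ref{thm:leximin} and the search-versus-decision distinction are sound but more than the paper bothers to spell out.
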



For \pgd, we show, in stark contrast to Theorem~\ref{thm:pps-po-np-hard}, that we can find an allocation satisfying \pps and PO in polynomial time. This is achieved using Algorithm~\ref{alg:pps-po}. Interestingly, it produces not an arbitrary allocation satisfying \pps and PO, but an allocation that assigns at least $p = \floor{m/n}$ goods to every player --- implying \pps, and maximizes weighted (utilitarian) social welfare according to some weight vector --- implying PO. 

\medskip
\begin{algorithm}[ht]
	\SetAlgoLined
	\KwIn{The set of players $N$, the set of private goods $M$, and players' utility functions $\{u_i\}_{i \in N}$}
	\KwOut{A deterministic allocation $\vA$ satisfying \pps and PO}
	$\vec{w} \gets (1/n,\ldots,1/n) \in \bbR^n$\;
	$\vA \gets \argmax_{\vec{A'}} \sum_{i \in N} w_i \cdot u_i(\vec{A'})$\;\label{alg:init-alloc}
	\While(\tcc*[f]{Until every player receives at least $p = \floor{m/n}$ goods}){$\exists i \in N, |A_i| < p$}
	{\label{alg:outer-start}
		$GT \gets \{i \in N : |A_i| > p\}$\tcc*{Partition players by the number of goods they receive} 
		$EQ \gets \{i \in N : |A_i| = p\}$\; 
		$LS \gets \{i \in N : |A_i| < p\}$\;
		$DEC = GT$\tcc*{Players whose weights we will decrease}\label{alg:dec-start}
		\While{$DEC \cap LS = \emptyset$}{
			\label{alg:inner1-start}
			\tcc{Minimally reduce weights of players in $DEC$ so a player in $DEC$ loses a good}
			$(i^*,j^*,g^*) \gets \argmin_{i \in DEC, j \in N\setminus DEC, g \in A_i} (w_i\cdot v_{i,g})/(w_j \cdot v_{j,g})$\;\label{alg:reduce-start}
			$r \gets (w_{i^*}\cdot v_{i^*,g^*})/(w_{j^*} \cdot v_{j^*,g^*})$\;\label{alg:r-select}
			$\forall i \in DEC,\,\, w_i \gets w_i / r$\;\label{alg:reduce-end}
			$DEC \gets DEC \cup \{j^*\}$\;\label{alg:add}
			$D(j^*) \gets (i^*,g^*)$\tcc*{Bookkeeping: $j^*$ can receive $g^*$ from $i^*$}
		}\label{alg:inner1-end}		
		$j^* \gets DEC \cap LS$\tcc*{Player from $LS$ who receives a good}
		\While{$j^* \notin GT$}{
			\label{alg:inner2-start}
			$(i^*,g^*) \gets D(j^*)$\;
			$A_{i^*} \gets A_{i^*} \setminus \{g^*\}$\;
			$A_{j^*} \gets A_{j^*} \cup \{g^*\}$\;
			$j^* \gets i^*$\;
		}\label{alg:inner2-end}
	}\label{alg:outer-end}
	\Return $\vA$\;
	\caption{Polynomial time algorithm to achieve \pps and PO for private goods}
	\label{alg:pps-po}
\end{algorithm}
\medskip

At a high level, the algorithm works as follows. It begins with an arbitrary weight vector $\vec{w}$, and an allocation $\vA$ maximizing the corresponding weighted (utilitarian) social welfare. Then, it executes a loop (Lines~\ref{alg:outer-start}-\ref{alg:outer-end}) while there exists a player receiving less than $p$ goods, and each iteration of the loop alters the allocation in a way that one of the players who received more than $p$ goods loses a good, one of the players who received less than $p$ goods gains a good, and every other player retains the same number of goods as before. 

Each iteration of the loop maintains a set $DEC$ of players whose weight it reduces. Initially, $DEC$ consists of players who have more than $p$ goods (Line~\ref{alg:dec-start}). When the weights are reduced enough so that a player in $DEC$ is about to lose a good to a player, necessarily outside $DEC$ (Lines~\ref{alg:reduce-start}-\ref{alg:reduce-end}), the latter player is added to $DEC$ (Line~\ref{alg:add}) before proceeding further. When a player who has less than $p$ goods is added to $DEC$, this process stops and the algorithm leverages the set of ties it created along the way to make the aforementioned alteration to the allocation (Lines~\ref{alg:inner2-start}-\ref{alg:inner2-end}). 

We now formally state that this produces an allocation satisfying \pps and PO, and that it runs in polynomial time; the proof is deferred to the appendix. 
%

\begin{theorem}
	For \pgd, \pps and PO can be satisfied in polynomial time.
	\label{THM:PPS-PO-POLYTIME}
\end{theorem}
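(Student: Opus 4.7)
The plan is to prove three claims: (C1) at termination, every player holds at least $p = \floor{m/n}$ goods, which implies \pps since any $p$-element subset of goods has total value no less than the sum of player $i$'s $p$ smallest valuations, i.e., $\pps_i$; (C2) throughout execution, $\vA$ maximizes the weighted utilitarian welfare $\sum_{i} w_i \cdot u_i(\vec{A'})$ for the current strictly positive weight vector $\vec{w}$, which yields PO because any Pareto improvement would strictly increase that sum, contradicting optimality; and (C3) the algorithm performs polynomially many arithmetic operations. Positivity of $\vec{w}$ is automatic since the initial weights are positive and each rescaling is by $1/r$ with $r \ge 1$.

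The central step is maintaining invariant (C2) inductively. The initialization in Line~\ref{alg:init-alloc} satisfies it by construction. For the weight rescaling in Lines~\ref{alg:reduce-start}--\ref{alg:reduce-end}, $r$ is the minimum over all candidate ratios and $r \ge 1$ by the prior invariant, so after dividing each $w_i$ with $i \in DEC$ by $r$: any good held by $i \in DEC$ remains optimally placed because the only comparisons that could deteriorate are against competitors $j \notin DEC$, and these are precisely the constraints controlled by the $\argmin$; any good held by $i \notin DEC$ remains optimally placed because its holder's weighted value is unchanged while competitors' weighted values only decrease. For the transfer of $g^*$ from $i^*$ to $j^*$ in inner loop 2, observe that at the instant $j^*$ was added to $DEC$ via $(i^*, g^*)$ we had $w_{i^*} v_{i^*, g^*} = w_{j^*} v_{j^*, g^*}$, and every subsequent rescaling in inner loop 1 applied the same factor to $w_{i^*}$ and $w_{j^*}$ (both members of $DEC$), so this equality persists until the swap, which is therefore welfare-preserving and keeps the invariant intact.

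For (C1) and (C3), I would define the deficit $\Phi = \sum_{i : |A_i| < p} (p - |A_i|)$ and show each outer-loop iteration decreases it by exactly one. Chasing the chain $j_1 \to j_2 \to \cdots \to j_k$ recorded in $D$, which starts at the $j_1 \in LS$ that triggers the exit of inner loop 1 and ends at some $j_k \in GT$ (the chain terminates only at a player from the initial $DEC = GT$), the cascade of swaps in inner loop 2 gives $j_1$ exactly one additional good, strips $j_k$ of exactly one good (leaving $|A_{j_k}| \ge p$ because $j_k \in GT$ had $|A_{j_k}| > p$), and leaves every intermediate $|A_{j_\ell}|$ unchanged; hence $\Phi$ drops by one and no new player enters $LS$. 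Since $\Phi \le m$ initially, the outer loop runs at most $m$ times. Inner loop 1 strictly enlarges $DEC$ per iteration and thus runs $O(n)$ times (the $\argmin$ is always well-defined because $DEC \supseteq GT$ provides a nonempty $A_{i^*}$ and $N \setminus DEC \supseteq LS \ne \emptyset$ provides a candidate $j^*$), and inner loop 2 traverses a chain of length $O(n)$.

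The main obstacle I anticipate is the joint management of (a) the weight rescalings that must keep every allocated good at a (possibly tied) maximum against all outside claimants, and (b) the pairwise equalities $w_{i^*} v_{i^*, g^*} = w_{j^*} v_{j^*, g^*}$ that must survive later rescalings to justify the cascade of swaps; both hinge on the single observation that $r$ equalizes one specific pair while never over-scaling any other. A smaller technicality is the edge case in which some candidate triple has $v_{i^*, g^*} = 0$ or $v_{j^*, g^*} = 0$ and the ratio is ill-defined; this can be dispatched by preprocessing away goods that nobody values and trivially reassigning any residual zero-utility items among deficit players without affecting welfare.
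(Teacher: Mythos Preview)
Your proposal is correct and follows essentially the same route as the paper: maintain the invariant that $\vA$ maximizes $\sum_i w_i u_i(A_i)$ (giving PO), show each outer iteration reduces the deficit $\Phi=\sum_{i\in LS}(p-|A_i|)$ by one via the $LS\to\cdots\to GT$ chain (giving termination and $|A_i|\ge p$, hence \pps), and count $O(m)$ outer iterations times $O(n)$ inner iterations. You actually spell out two points the paper only sketches---that the equality $w_{i^*}v_{i^*,g^*}=w_{j^*}v_{j^*,g^*}$ survives later rescalings because both endpoints sit in $DEC$, and the $v=0$ edge case---so your argument is, if anything, slightly more complete.
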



The complexity of finding an allocation (of private goods) satisfying the stronger guarantee \rrs along with PO in polynomial time remains open, as does the complexity of finding an allocation satisfying \propone and PO.

We note that the convenient approach of weighted welfare maximization we use in Theorem~\ref{THM:PPS-PO-POLYTIME} cannot be used for finding an outcome satisfying \rrs and PO, as the following example shows. This leads us to conjecture that it may be \nphard to find such an outcome.

\begin{example}
	Consider a private goods division problem with two players and four goods. Player 1 has utilities $(4,4,1,1)$ and player 2 has utilities $(3,3,2,2)$ for goods $(g_1, g_2, g_3, g_4)$, respectively. Note that $\rrs_1=\rrs_2=5$. Consider assigning weights $w_1$ and $w_2$ to players $1$ and $2$, respectively. If $4w_1>3w_2$, i.e., $w_1 > 3w_2/4$ then player 1 receives both $g_1$ and $g_2$, which means that player 2 receives utility less than her \rrs share. On the other hand, if $3w_2>4w_1$, i.e., $w_1<3w_2/4$ then player 2 receives both $g_1$ and $g_2$, which means that player 1 receives utility less than her \rrs share. 
	
	The only remaining possibility is that $w_1=3w_2/4$, but in that case, player 2 receives both $g_3$ and $g_4$. Regardless of how we divide goods $g_1$ and $g_2$, one of the two players still receives utility less than her \rrs share.
\end{example}

In contrast, a simple modification of Algorithm~\ref{alg:pps-po} seems to quickly find an allocation satisfying \propone and PO in hundreds of thousands of randomized simulations. 
At each iteration of this version, the set $DEC$ initially consists of players who attain their proportional share (it is easy to show using the Pigeonhole principle that this set is non-empty for any weighted welfare maximizing allocation), and ends when a player is added to $DEC$ that is not currently achieving \propone. Thus, at every loop, a player that was receiving her proportional share may lose a good (but will still achieve at least \propone), the player added to $DEC$ that was not achieving \propone gains a good, and some players that were achieving \propone but not their proportional share may lose a good, but only if they gain one too. These three classes of players are therefore analogous to players with more than $p$ goods, less than $p$ goods, and exactly $p$ goods in Algorithm~\ref{alg:pps-po}. Unfortunately, we are unable to prove termination of this algorithm because it is possible that a player who achieves \propone but not her proportional share loses a high-valued good while gaining a low-valued good, thus potentially sacrificing \propone. Thus we do not get a property parallel to the key property of Algorithm~\ref{alg:pps-po}, that no player's utility ever drops below her \pps share, after she attains it. However, our algorithm always seems to terminate quickly and finds an allocation satisfying \propone and PO in our randomized simulations, which leads us to conjecture that it may be possible to find an allocation satisfying \propone and PO in polynomial time, either from our algorithm directly or via some other utilization of weighted welfare maximization.
\section{Discussion}
\label{sec:disc}

We introduced several novel fairness notions for \seqdecision and considered their relationships to existing mechanisms and fairness notions. Throughout the paper, we highlighted various open questions including the existence (and complexity) of a mechanism satisfying \rrs, \propone, and PO, the complexity of finding an outcome satisfying \propone and PO (for public decisions and private goods), the complexity of finding an outcome satisfying \rrs and PO (for private goods), and whether MNW provides a constant approximation to \rrs better than $1/2$. 

So far we only considered deterministic outcomes. If randomized outcomes are allowed (an alternative interpretation in the private goods case is that the goods are infinitely divisible), then the MNW solution satisfies \prop as a direct consequence of it satisfying \propone for deterministic outcomes (Theorem~\ref{thm:mnw-prop1}).\footnote{Of course, the \emph{realization} may fail to satisfy \prop (and other desiderata), but the \emph{lottery} is fair if we consider expected utilities.} To see this, consider replicating each \issue $K$ times and dividing utilities by $K$. The relative effect of granting a single player control of a single \issue becomes negligible. Thus, as $K$ approaches infinity, the utility of each player $i$ in an MNW outcome approaches a value that is at least their proportional share $\prop_i$. The fraction of copies of \issue $t$ in which outcome $a^t_j$ is selected can be interpreted as the weight placed on $a_j^t$ in the randomized outcome. Because \rrs, \pps, and \propone are relaxations of \prop, the randomized MNW outcome also satisfies all of them. 

For \pgd, this can be seen as a corollary of the fact that the randomized MNW outcome satisfies envy-freeness, which is strictly stronger than proportionality. This hints at a very interesting question: Is there a stronger fairness notion than proportionality in the \seqdecision framework that generalizes envy-freeness in \pgd? Although such a notion would not be satisfiable by deterministic mechanisms, it may be satisfied by randomized mechanisms, or it could have novel relaxations that may be of independent interest.

At a high level, our work provides a framework bringing together two long-studied branches of social choice theory --- fair division theory and voting theory. Both have at their heart the aim to aggregate individual preferences into a collective outcome that is fair and agreeable, but approach the problem in different ways. Fair division theory typically deals with multiple private goods, assumes access to cardinal utilities, and focuses on notions of fairness such as envy-freeness and proportionality. Voting theory, in contrast, typically deals with a single public decision (with the exception of combinatorial voting mentioned earlier), assumes access only to less expressive ordinal preferences, has the ``one voter, one vote'' fairness built inherently into the voting rules, and focuses on different axiomatic desiderata such as Condorcet consistency and monotonicity. 

Of course, one can use a voting approach to fair division, since we can have players express preferences over complete outcomes, and this approach has been used successfully to import mechanisms from voting to fair division and vice versa~\cite{ABS13,AS14}. However, not only does this approach result in an exponential blowup in the number of alternatives, it also does not provide a convenient way to express fair-division-like axioms. 
Continuing to explore connections between the two fields remains a compelling direction for future work.

\section*{Acknowledgements}
	Conitzer and Freeman are supported by NSF under awards
	IIS-1527434 and CCF-1337215, ARO under grants W911NF-12-1-0550 and
	W911NF-11-1-0332, and a Guggenheim Fellowship.


\clearpage
\appendix 
\section*{Appendix}
\section{Relationships Among Fairness Axioms}
\label{app:fairness-axioms}

In this section, we analyze the relationship between the fairness properties we introduce in this paper, namely \rrs, \pps, and \propone. First, it is easy to show that \propone does not give any approximation to \rrs or \pps, both for public decisions and for private goods, because it is easy to construct examples where a player receives zero utility, still satisfies \propone, but has non-zero \pps share. 

In the other direction, for public decisions, we showed that \rrs implies $1/2$-\propone (Lemma~\ref{lem:rrs-propone}). For private goods, we can refine this result a bit further.

\begin{theorem} \label{thm:rrs-prop1}
For \pgd, \rrs implies \propone if and only if $m \le 4n-2$.
\end{theorem}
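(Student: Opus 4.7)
The plan is to treat the two directions separately. For necessity, I construct, for each $m \ge 4n - 1$, an instance witnessing that \rrs does not imply \propone. Let player $1$'s utilities on $g_1, g_2, \ldots, g_m$ be $3, 1, 1, \ldots, 1, 0, 0, \ldots, 0$ with ones in positions $2$ through $4n - 1$, and let all other players value every good at zero. Then the sorted sequence $v$ for player $1$ satisfies $v_{sn} = 1$ for $s \le 3$ and $v_{sn} = 0$ for $s \ge 4$, so $\rrs_1 = 3$; allocating $g_1$ to player $1$ and splitting the rest arbitrarily meets \rrs with equality for her (and trivially for the others). But $\prop_1 = (4n + 1)/n > 4$, whereas the best single-good \propone improvement raises her utility only to $3 + 1 = 4$.

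For sufficiency, fix an allocation satisfying \rrs and a player $i$ with sorted utilities $v_1 \ge v_2 \ge \cdots \ge v_m$; write $S$ for the indices in $A_i$, let $S_k = \sum_{t=1}^k v_t$, and $t^* = \min([m] \setminus S)$. Since $\{1, \ldots, t^* - 1\} \subseteq S$ we have $u_i(A_i) \ge S_{t^*-1}$, and combining with the \rrs hypothesis gives $u_i(A_i) + v_{t^*} \ge \max(S_{t^*}, \rrs_i + v_{t^*})$. Thus it suffices to show that for every $t \in [m]$, either $S_t \ge \prop_i$ or $\rrs_i + v_t \ge \prop_i$. The engine is the matching inequality
\[
(n-1)\bigl(v_t + \rrs_i\bigr) \ge \sum_{s=t+1}^m v_s \quad \text{for } t \ge 2.
\]
Focusing on $p = 3$ (the critical regime; the cases $p \le 2$ are analogous with fewer $v_{sn}$ terms), the left-hand side is a sum of $4(n-1) = 4n - 4$ copies drawn from $\{v_t, v_n, v_{2n}, v_{3n}\}$, which exactly matches the $4n - 4$ terms on the right when $m = 4n - 2$ and $t = 2$; I pair the $n-1$ copies of $v_t$ with $v_{t+1}, \ldots, v_{t+n-1}$, the $n-1$ copies of $v_n$ with the next $n-1$ terms, and so on for $v_{2n}$ and $v_{3n}$, invoking monotonicity of $v$ to dominate each pair. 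Given this, if $S_{t-1} \ge \rrs_i$ then $(n-1) S_t \ge (n-1)(v_t + \rrs_i) \ge \sum_{s=t+1}^m v_s$, which rearranges to $n S_t \ge \sum_{s=1}^m v_s$, i.e., $S_t \ge \prop_i$; and if $S_{t-1} < \rrs_i$ then $n(\rrs_i + v_t) \ge \sum_{s=t+1}^m v_s + \rrs_i + v_t > \sum_{s=t+1}^m v_s + v_t + S_{t-1} = \sum_{s=1}^m v_s$, giving $\rrs_i + v_t \ge \prop_i$. The base case $t = 1$ reduces to the bound $\rrs_i + v_1 \ge \prop_i$ established within the proof of Lemma~\ref{lem:rrs-propone}.

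The main obstacle will be the matching at the tight boundary $m = 4n - 2$, where the copy count on each side is equal and a single extra RHS good (as in $m = 4n - 1$) breaks the argument; this off-by-one is precisely what fixes the threshold. The case $n = 2$ collapses $v_n$ with $v_2$, so the ``copies of $v_t$'' and ``copies of $v_n$'' groups must be merged in the bookkeeping, and the sub-cases $p \le 2$ require correspondingly shorter matchings, but the structure in each regime is the same: count $(n-1)(p+1)$ copies on the left against $m - t$ terms on the right and verify that they balance exactly at $m = 4n - 2$.
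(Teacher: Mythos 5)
Your proof is correct and takes essentially the same route as the paper's: for necessity, an instance with one good worth exactly the round-robin share and unit-value goods filling positions $2$ through $4n-1$ (the paper's counterexample has the same structure with different numbers), and for sufficiency, a counting argument that dominates the tail $\sum_{s>t} v_s$ by $n-1$ copies each of $v_t, v_n, v_{2n}, v_{3n}$ — the paper packages this as a $\tfrac{1}{n}/\tfrac{n-1}{n}$ convex combination of the prefix bound and the $\rrs$ bound rather than your case split on $S_{t-1}$ versus $\rrs_i$, but the underlying matching is identical. The steps you leave implicit (the pairing for general $t \ge 2$ and the $p \le 2$ subcases) go through exactly as you indicate.
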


\begin{proof}
First, let us assume $m > 4n-2$. Consider the following fair division instance. Player $1$'s values, in the descending order, are as follows. 
\[
u_{max}^j(1) = \left\{
\begin{array}{ll}
(k-1) \cdot n + 1 &\mbox{if } j=1,\\
n &\mbox{if } j \in \{2,\ldots,k\cdot n-1\},\\
1 &\mbox{if } j \ge k \cdot n.
\end{array}
\right.
\]
It is easy to check that player $1$'s RRS share is exactly $u_{max}^1(1)$. Consider an allocation in which player $1$ receives only his most valuable good, and the remaining goods are partitioned among the other players arbitrarily. For the sake of completeness, let each other player have value $1$ for each good he receives under this allocation, and $0$ for the remaining goods. Hence, the allocation satisfies RRS.

Now, player $1$'s proportional share is given by
\begin{align*}
\frac{(k-1)\cdot n + 1 + (k n-2)\cdot n + (m-k n+1) \cdot 1}{n}
&= \frac{m + k n^2 - 3 n + 2}{n}\\
&> \frac{n + k n^2}{n}\\
&=k n + 1,
\end{align*}
where the second transition follows because $m > 4n-2$. 

The highest value that player $1$ can achieve by adding one more good to his allocation is $(k-1)\cdot n + 1 + n = k n + 1$, which falls short of the proportional share. Hence, the allocation is not Prop1. 

Now, let us assume that $m \le 4n-2$. Hence, $k \le 3$. Take a fair division instance, and let us focus on a player $i$. For the sake of notational convenience, we define $u_{max}^j(i) = 0$ for $j \in \{m+1,\ldots,4n-2\}$. Note that this affects neither his RRS share nor his satisfaction of Prop1. 

We now show that if player $i$ receives at least as much value as his RRS share $u_{max}^n(1) + u_{max}^{2n}(1) + u_{max}^{3n}(1)$, then we can make player $i$ receive his proportional share by adding a single good to his allocation. 

If player $i$ does not receive his most valuable good, then this can be accomplished by adding his most valuable good to his allocation because
\begin{align*}
&u_{max}^1(i)+u_{max}^n(1) + u_{max}^{2n}(1) + u_{max}^{3n}(1) \\
&\quad\ge \frac{\sum_{j=1}^n u_{max}^j(i)}{n} + \frac{\sum_{j=n+1}^{2n} u_{max}^j(i)}{n} + \frac{\sum_{j=2n+1}^{3n} u_{max}^j(i)}{n} + \frac{\sum_{j=3n+1}^{4n-2} u_{max}^j(i)}{n} \\
&\ge \frac{\sum_{j=1}^m u_{max}^j(i)}{n}.
\end{align*}
The first transition follows because $u_{max}^j(i) \ge v_{max}^{j+1}(i)$ for all $j \in [4n-3]$. 

Suppose player $i$ receives his most valuable good. Let $t$ be the smallest index such that player $i$ does not receive his $t^{\text{th}}$ most valuable good. Hence, $t \ge 2$. Let $u_i$ denote the utility to player $i$ under the current allocation. Then, we have that 
\begin{align}
u_i &\ge \sum_{j=1}^{t-1} u_{max}^j(i). \label{eqn:rrs-prop1-1}\\
u_i &\ge v_{max}^n(i) + u_{max}^{2n}(i) + v_{max}^{3n}(i). \label{eqn:rrs-prop1-2}
\end{align}
Multiplying Equation~\eqref{eqn:rrs-prop1-1} by $1/n$ and Equation~\eqref{eqn:rrs-prop1-2} by $(n-1)/n$, and adding, we get
\begin{align*}
u_i &\ge \frac{\sum_{j=1}^{t-1} u_{max}^j(i)}{n} + \frac{n-1}{n} \cdot (u_{max}^n(i) + u_{max}^{2n}(i) + u_{max}^{3n}(i))\\
&\ge \frac{\sum_{j=1}^{t-1} u_{max}^j(i) + \sum_{j=n+1}^{2n-1} u_{max}^j(i) + \sum_{j=2n+1}^{3n-1} u_{max}^j(i) + \sum_{j=3n}^{4n-2} u_{max}^j(i)}{n}\\
&= \frac{\sum_{j=1}^{t-1} u_{max}^j(i) + \sum_{j=n+1}^{4n-2} u_{max}^j(i) - u_{max}^{2n}(i)}{n}\\
&\ge \frac{\sum_{j=1}^{t-1} u_{max}^j(i) + \sum_{j=n+2}^{4n-2} u_{max}^j(i)}{n}.
\end{align*}

If $t \ge n+2$, player $i$ already receives his proportional share. Otherwise, let us now add player $i$'s $t^{\text{th}}$ most valuable good to his allocation. His utility increases to 
\[
u_i + u_{max}^t(i) \ge \frac{\sum_{j=1}^{t-1} u_{max}^j(i) + \sum_{j=n+2}^{4n-2} u_{max}^j(i)}{n} + \frac{\sum_{j=t}^{n+1} u_{max}^j(i)}{n} = \frac{\sum_{j=1}^m u_{max}^j(i)}{n},
\]
where the first transition follows because $t \ge 2$. Thus, player $i$ receives his proportional share after adding a single good to his allocation. Because player $i$ was chosen arbitrarily, we have that the allocation satisfies \propone. 
\end{proof}

\section{Proof of Lemma~\ref{LEM:MAX-REDUCTION}}

We will say that a set $\set{x_1,\ldots,x_n}$ of $n$ non-negative real numbers is \emph{feasible} for a given $0 < \delta <1$ if $\sum_{k=1}^n \max \{ 0, 1- x_k\} \le \delta$.

Let $X=\set{x_1,\ldots,x_n}$ be a feasible set and let $i = \argmin_{k \in [n]} \set{x_k}$, so that $x_i$ is the minimum value in $X$. Suppose that $x_i > 1-\delta$. We will show that there necessarily exists another feasible set $X'=\set{x'_1,\ldots,x'_n}$ with $\prod_{i=1}^n x'_i < \prod_{i=1}^n x_i$.

To that end, let $j = \argmin_{k \in [n] \setminus \{ i \}} \set {x_k}$, so that $x_j$ is the second-lowest value in $X$. If $x_j \ge 1$ then set $x'_i=1-\delta$ and $x'_k=x_k$ for all $k \in [n] \setminus \{ i \}$. Clearly this set $X'$ is feasible since 
\[ \sum_{k=1}^n \max \{ 0, 1- x_k\} = 1-x_i=\delta, \]
and 
\[ \prod_{k=1}^n x'_k = x'_i \prod_{k \not= i} x'_k < x_i \prod_{k \not= i} x_k = \prod_{k=1}^n x_k. \]

Now consider the case where $x_j < 1$. Let $\epsilon = \frac{1}{2} (1-x_j)$ and define $x_i'=x_i-\epsilon$, $x_j'=x_j+\epsilon$, and $x'_k=x_k$ for all $k \in [n] \setminus \{ i ,j \}$.. For feasibility of $X'$,
\begin{align*} \sum_{k=1}^n \max \{ 0, 1- x'_k\} &= \sum_{k \in [n] \setminus \{ i,j \}} \max \{ 0, 1-x'_k \} + \max \{ 0, 1-x_i+\epsilon \} + \max \{ 0, 1-x_i+\epsilon \} \\
	&= \sum_{k \in [n] \setminus \{ i,j \}} \max \{ 0, 1-x_k \} + (1-x_i+\epsilon)+(1-x_j-\epsilon)\\
	&= \sum_{k \in [n] \setminus \{ i,j \}} \max \{ 0, 1-x_k \} + (1-x_i)+(1-x_j)\\
	&= \sum_{k=1}^n \max \{ 0, 1- x_k\},
	\end{align*}
and 
\[ \prod_{k=1}^n x'_k = x'_i x'_j \prod_{k \not= i,j} x'_k = (x_i-\epsilon)(x_j+\epsilon) \prod_{k \not= i,j} x_k = (x_ix_j - (x_j-x_i)\epsilon-\epsilon^2) < x_ix_j \prod_{k \not= i,j} x_k = \prod_{k=1}^n x_k. \]
To complete the proof, it remains to show that any feasible set with smallest entry $x_i=1-\delta$ has $\prod_{k=1}^n x_k \ge 1-\delta$. Note that if $x_i=1-\delta$ then for $X$ to be feasible it can not be the case that $x_j<1$. Therefore,
\[ \prod_{k=1}^n x_k = (1-\delta) \prod_{k \not= i} x_k \ge (1-\delta). \]
\qed

\section{Proof of Theorem~\ref{THM:PPS-PO-POLYTIME}}

	First, we prove that Algorithm~\ref{alg:pps-po} produces an allocation satisfying \pps and PO. In particular, we prove that at the end, allocation $\vA$ satisfies (A) $|A_i| \ge p$ for every player $i \in N$, and (B) $\vA$ maximizes the weighted social welfare $\sum_{i \in N} w_i \cdot u_i(A_i)$. Note that property (A) implies that $\vA$ satisfies \pps, and property (B) implies that $\vA$ is PO. 
	
	Before we prove these claims, we note that $\vA$ maximizes the weighted social welfare if and only if it allocates each good $g \in M$ to a player $i$ maximizing $w_i \cdot u_i(g)$. 
	
	First, we prove claim (A). Because the outer loop (Lines~\ref{alg:outer-start}-\ref{alg:outer-end}) explicitly runs until each player receives at least $p$ goods, we simply need to prove that the loop terminates. For this, we need to analyze the first inner loop (Lines~\ref{alg:inner1-start}-\ref{alg:inner1-end}) and the second inner loop (Lines~\ref{alg:inner2-start}-\ref{alg:inner2-end}). 
	
	Each iteration of the first inner loop reduces the weights of players in $DEC$, and adds a player $j^* \in N\setminus DEC$ to $DEC$. Because $|DEC|$ increases by one in each iteration, the first inner loop terminates after $O(n)$ iterations. 
	
	The second inner loop starts from the player $j^* \in LS$ that was added to $DEC$ at the end of the first inner loop, and traces back to the player $i^*$ that was about to lose good $g^*$ to $j^*$ when $j^*$ was added to $DEC$. The good is explicitly transferred, and if player $i^*$ had exactly $p$ goods initially, the algorithm continues to find another good to give back to player $i^*$ by tracing back to the conditions under which player $i^*$ was added to $DEC$. This way, the transfers add a good to a player who was in $LS$, maintain the number of goods of the players who were in $EQ$, and remove a good from a player who was in $GT$. Because in each iteration, player $i^*$ had to be present in $DEC$ before player $j^*$ was added, this loop cannot continue indefinitely, and must terminate in $O(n)$ iterations as well. 
	
	Thus, both inner loops terminate in $O(n)$ iterations, and in each iteration of the outer loop, a player in $LS$ receives an additional good without any new players being added to $LS$. This monotonically reduces the metric $\sum_{i \in LS} p-|A_i|$ by at least $1$ in each iteration. Because this metric can be at most $p \cdot n \le m$ to begin with, the outer loop executes $O(m)$ times. 
	
	We now prove claim (B). That is, we want to show that allocation $\vA$ remains a maximizer of the weighted social welfare according to weight vector $\vec{w}$ at during the execution of the algorithm. Because we specifically select $\vA$ to be a weighted social welfare maximizer in Line~\ref{alg:init-alloc}, we simply show that neither the weight update in the first inner loop nor the changes to the allocation in the second inner loop violate this property. 
	
	In the first inner loop, because the weights of players in $DEC$ are reduced by the same multiplicative factor, goods can only transfer from players in $DEC$ to players in $N\setminus DEC$. However, the choice of $r$ in Line~\ref{alg:r-select} ensures that the weight reduction stops when the first such potential transfer creates a tie, preserving allocation $\vA$ as a weighted welfare maximizer. Alterations to allocation $\vA$ during the second inner loop also do not violate this property because this loop only transfers a good $g^*$ from player $i^*$ to player $j^*$ when the two players were anyway tied to receive the good. 
	
	This concludes our claim that the algorithm terminates, and correctly produces an allocation satisfying \pps and PO. We already established that the outer loop executes $O(m)$ times, and the two inner loops execute $O(n)$ times. The bottleneck within the inner loops is the $\argmin$ computation in Line~\ref{alg:reduce-start}, which requires $O(n \cdot m)$ time to find the minimum across all goods owned by players in $DEC$ and all players outside $DEC$. Consequently, the asymptotic running time complexity of the algorithm is $O(m \cdot n \cdot n m) = O(n^2 \cdot m^2)$.
\qed

\end{document}